\newcommand{\program}{P}
\newcommand{\virtualhardware}{\mathcal{T}}
\newcommand{\hardware}{H}
\newcommand{\problemfont}{\textsc}
\newcommand{\algorithmfont}{\textsf}
\newcommand{\chimera}{$\mathcal{C}_{L,M,N}$}
\newcommand{\chimeraargs}[3]{$\mathcal{C}_{#1, #2, #3}$}
\DeclareMathOperator*{\argmin}{arg\,min}
\newtheorem{subroutine}{Subroutine}
\newcommand{\Callsf}[2]{\algorithmfont{#1}(#2)}
\newcommand{\return}[1]{\textbf{return} #1}
\title{Optimizing Adiabatic Quantum Program Compilation using a Graph-Theoretic Framework}
\author{Timothy~D.~Goodrich \and Blair~D.~Sullivan \and Travis~S.~Humble}
\institute{
T.D. Goodrich \and B.D. Sullivan \at
Department of Computer Science,
North Carolina State University,
Raleigh, North Carolina 27606, USA.\\
\email{Corresponding author: tdgoodri@ncsu.edu}
\and
T.S. Humble \at
Quantum Computing Institute,
Oak  Ridge  National  Laboratory,
1  Bethel  Valley  Road,
Oak  Ridge,  Tennessee  37831,  USA.\\
}
\begin{document}

\maketitle

\begin{abstract}
Adiabatic quantum computing has evolved in recent years from a theoretical
field into an immensely practical area, a change partially sparked by D-Wave
System's quantum annealing hardware. These multimillion-dollar quantum
annealers offer the potential to solve optimization problems millions of times
faster than classical heuristics, prompting researchers at Google, NASA and
Lockheed Martin to study how these computers can be applied to complex
real-world problems such as NASA rover missions. Unfortunately, compiling
(embedding) an optimization problem into the annealing hardware is itself a
difficult optimization problem and a major bottleneck currently preventing
widespread adoption. Additionally, while finding a single embedding is
difficult, no generalized method is known for tuning embeddings to use minimal
hardware resources. To address these barriers, we introduce a graph-theoretic
framework for developing structured embedding algorithms. Using this framework,
we introduce a biclique virtual hardware layer to provide a simplified
interface to the physical hardware. Additionally, we exploit bipartite
structure in quantum programs using odd cycle transversal (OCT) decompositions.
By coupling an OCT-based embedding algorithm with new, generalized reduction
methods, we develop a new baseline for embedding a wide range of optimization
problems into fault-free D-Wave annealing hardware. To encourage the reuse and
extension of these techniques, we provide an implementation of the framework
and embedding algorithms.
\end{abstract}

\section{Introduction}
Adiabatic quantum computing (AQC) is a model of computation that utilizes
quantum mechanics to solve difficult optimization problems.
As originally proposed by Farhi et al.~\cite{farhi2000quantum}, AQC relies on
the dynamical evolution of a quantum state under a Hamiltonian that changes
adiabatically from an initial to final form.
This computational model uses the final Hamiltonian to express an optimization
problem such that adiabatic evolution will recover the corresponding ground
state.

In its most general form, the AQC model is equivalent to other universal
quantum computing models. However, any limitation on the Hamiltonian forms may
reduce the power of the computational model.
Recently, an embodiment of the AQC model with a restricted Hamiltonian was
developed using superconducting flux qubits by D-Wave Systems Inc.
This quantum processor provides a large number of addressable qubits (up to
2048 in the latest D-Wave 2000Q processor) that implement a programmable Ising
model over a restricted geometry.
While not a universal quantum computer, the D-Wave processor has been
shown to produce quantum effects and yield time-to-solution orders of magnitude
faster than classical algorithms \cite{denchev2016what, king2017quantum}.
Use of this \emph{quantum annealer} \cite{kadowaki1998quantum} has evolved
beyond the design stage to testing and
deployment, with recent applications including computational chemistry, NP-hard
graph problems, image recognition, and more \cite{denchev2016what,
kassal2011simulating, lucas2014ising, neven2008image, rieffel2015case,
venturelli2015quantum}.

A key step in using current AQC-based processors is compiling the executable
program that will run on hardware with restricted connectivity
\cite{humble2014integrated, britt2015high}.
Both the problem and hardware layouts are conventionally represented using
graphs with the problem defined by variables connected with dependencies and
the hardware layouts defined by qubits connected with couplers.
Under this graph-theoretic formulation, the compilation process reduces to the
NP-hard problem of \emph{minor embedding} the problem graph into the hardware
graph. In practice, this step represents a limitation bottleneck for the
end-to-end program performance because existing embedding algorithms take
orders of magnitude longer to execute than the quantum annealer itself
\cite{humble2016performance}. Furthermore, no efficient universal embedding
algorithm exists, with past algorithms addressing specific classes of problem
instance (e.g. complete graphs, very sparse graphs, etc.) and hardware instance
(e.g. D-Wave Chimera graph, etc.), along with a myriad of additional
assumptions (e.g. fault-free hardware, parameter values, etc.). However, given
the disjoint development of algorithms for these specialized instances, the
resulting techniques cannot be combined in a common framework.

To address this incompatibility, we introduce a graph-theoretic framework for
developing tuned and modularized embedding algorithms. This framework
introduces the concept of a \emph{virtual hardware} graph that provides a
judiciously
simplified representation of the physical hardware graph, greatly reducing the
complexity of embedding subroutines. Many existing embedding algorithms are
compatible with the virtual hardware layer and we rewrite them as modular
subroutines. We then introduce generalized reduction subroutines for minimizing
the hardware footprint of a given embedding. We are able to apply these
reduction subroutines to the embedding algorithms emulated by our framework,
producing notable improvements for reducing hardware footprint.

As a proof of concept, we provide a complete bipartite virtual hardware
compatible with the D-Wave Chimera hardware structure. By exploiting bipartite
problem structure with an odd cycle transversal decomposition (OCT), we are
able to provide embeddings for edge-dense problem graphs. We additionally
present a linear-time approximation algorithm for computing OCT decompositions,
leading to fast embedding algorithms. Further use cases are provided by
Hamilton and Humble \cite{hamilton2016identifying}.

Finally, we provide an efficient implementation of the full virtual hardware
framework, including new and existing embedding and reduction subroutines,
available at \url{https://github.com/TheoryInPractice/aqc-virtual-embedding}.
Experimentally, we find that this framework is able
to unify and expand on existing embedding algorithms, providing baseline tools
for future development. Further, we find that OCT-based embedding algorithms
perform better -- in run time, size of problem graph embedded, and number of
qubits used -- than the existing TRIAD and CMR algorithms
\cite{cai2014practical, choi2011minor}.

The manuscript is organized as follows: Section 2 introduces adiabatic quantum
computing and the D-Wave hardware, including an overview of related work.
Section 3 defines virtual hardware and a stack of baseline subroutines -- the
graph-theoretic framework -- and details the emulation and enhancement of
existing embedding algorithms using our framework. Section 4 introduces a new
embedding subroutine that exploits bipartite structure in problem and hardware
graphs by using an odd cycle transversal decomposition and biclique virtual
hardware, respectively; we additionally present a new, fast approximation
algorithm for computing an odd cycle transversal. Section 5 contains
experimental results of embedding algorithms detailed in previous sections.
Finally, we summarize, present our conclusions, and outline future work in
Section 6.

\section{Background}

\begin{figure}[t]
\centering
\includegraphics[width=0.65\textwidth]{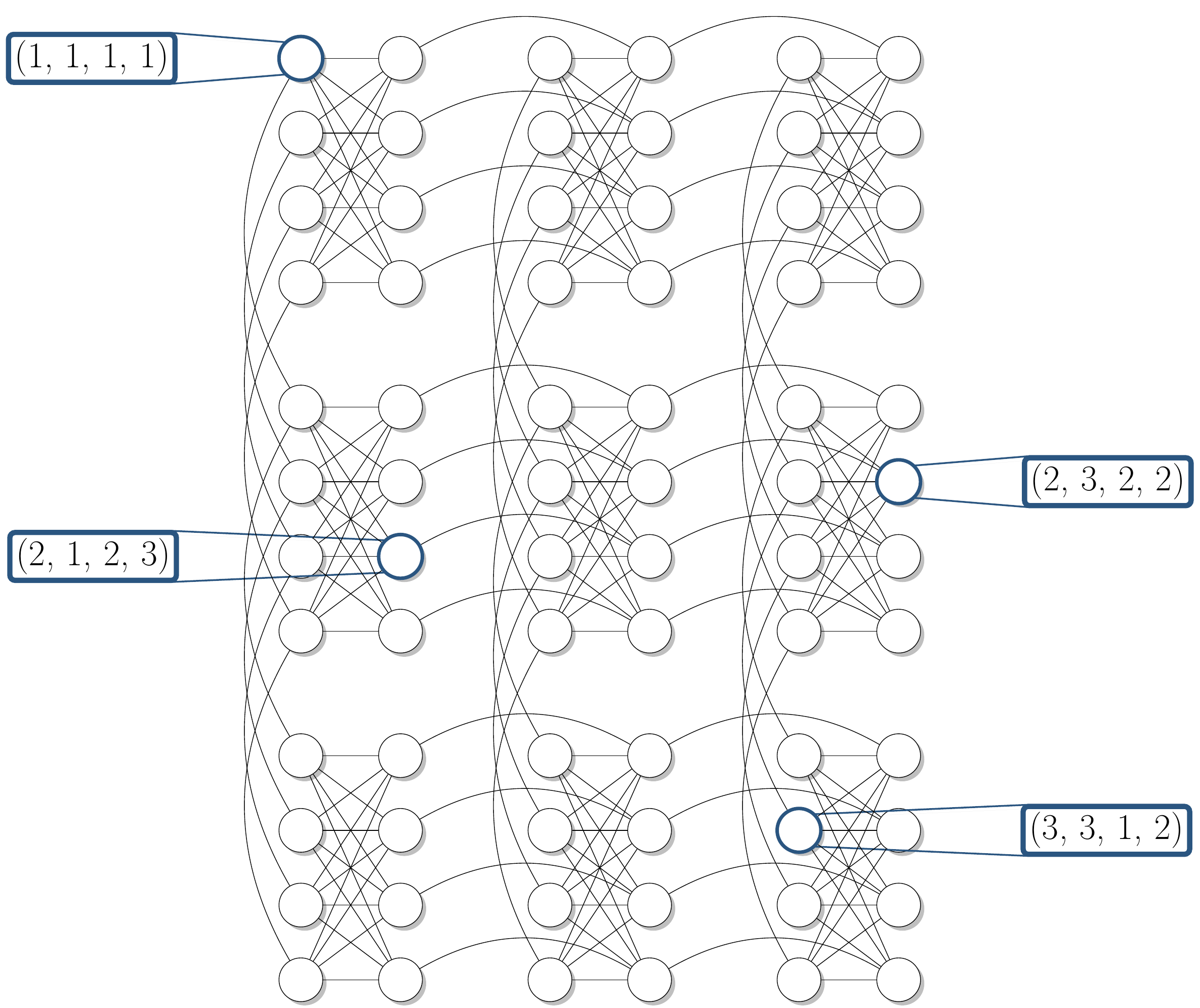}
\caption{A Chimera \chimeraargs{3}{3}{4} graph. The Chimera location labels of
four qubits are highlighted.}
\label{figure:chimera}
\end{figure}

We assume graphs are simple and undirected. For a graph $G$, we denote its
vertices with $V(G)$ and edges with $E(G)$, and let $n = |V(G)|$ and
$m = |E(G)|$ if the graph in question is clear from the context. Given a set of
vertices $S$, we use $G[S]$ to denote the subgraph induced by $S$, and $G
\setminus S$ to denote $G[V(G)-S]$. We denote
the complete graph on $n$ vertices as $K_n$ and the complete bipartite graph on
$n = n_1 + n_2$ vertices with partite sets of order $n_1, n_2$ as
$K_{n_1, n_2}$. As shorthand, we also refer to complete bipartite graphs as
\emph{bicliques}. We denote the neighbors of a vertex $u$ as $N(u)$. An edge
$(u, v)$ can be contracted by adding a vertex $uv$
with incident edges to the vertices $N(u) \cup N(v)$, and then deleting $u$
and $v$. We also define the contraction of a connected
subgraph $H$ as the iterative contraction of its edges (order does not matter
due to connectivity). For a set $S$ we denote its power set as $\mathcal{P}(S)$.

Current and prior D-Wave hardware layouts are based on the more general
\emph{Chimera graphs}. Visualized in Fig. \ref{figure:chimera}, a Chimera graph
\chimera~is an $M \times N$ grid of biclique $K_{L,L}$ cells. For example, the
latest D-Wave 2000Q hardware is based on a \chimeraargs{4}{16}{16} graph. In
the context of Chimera hardware, we assume that the qubits are labeled by their
location in the Chimera layout: $(\ell_r, \ell_c, \ell_p, \ell_h)$ where
$1 \leq \ell_r \leq M$ identifies a row, $1 \leq \ell_c \leq N$ identifies a
column, $\ell_p \in \{1, 2\}$ identifies a partite set, and
$1 \leq \ell_h \leq L$ denotes the in-cell height index.

\subsection{Minor Embedding for Adiabatic Quantum Programming}

\begin{figure}[t]
\centering
\includegraphics[width=\textwidth]{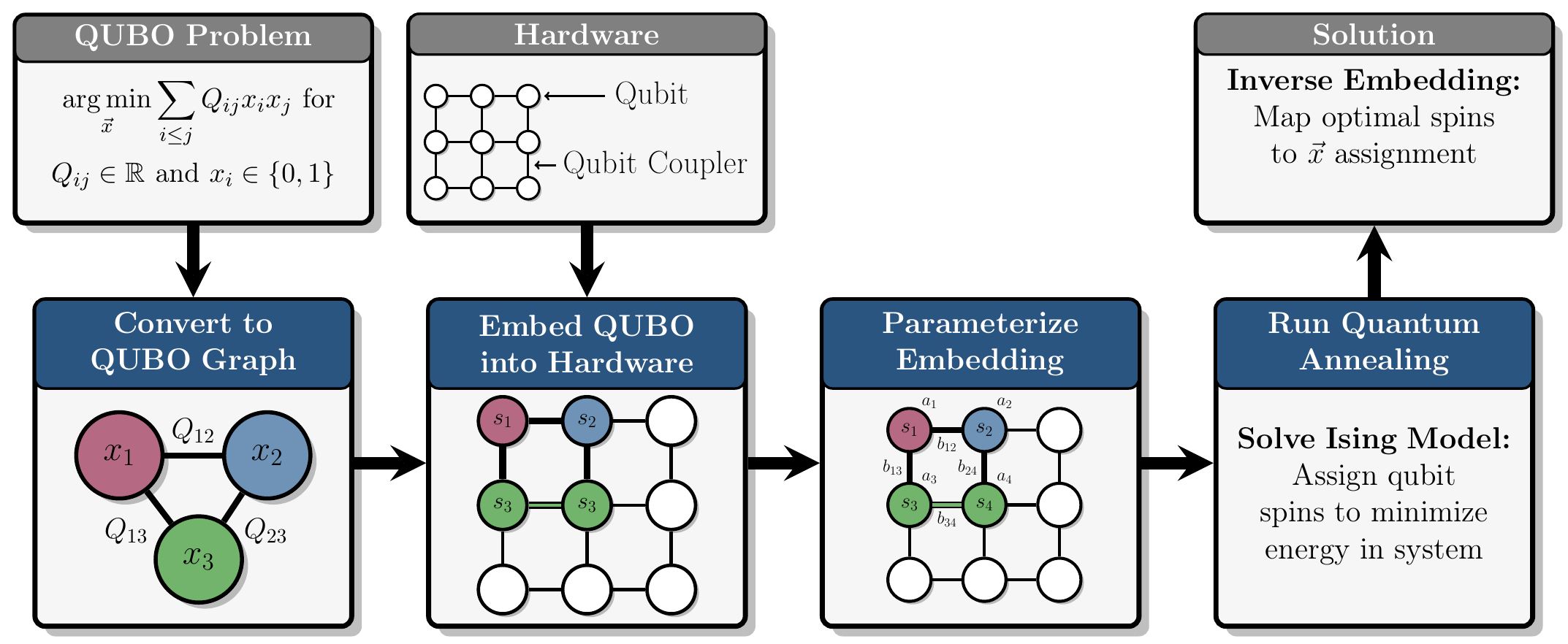}
\caption{A typical workflow for running QUBO-formulated optimization problems
on an AQC processing unit. Finding efficient and effective embedding algorithms
is an area of active research.}
\label{figure:pipeline}
\end{figure}

Programming a quantum annealer, such as the D-Wave hardware, requires setting
the parameters that define the underlying Ising Model. This process includes
defining the positive and negative spins as variable assignments such that
logical dependencies are maintained within the restricted connectivity of the
hardware graph. Recently, several efficient compilation methods have been
proposed for managing this process \cite{choi2008minor, humble2014integrated,
rieffel2015case, venturelli2015quantum}.

A generalized compilation pipeline is shown in Fig.~\ref{figure:pipeline}. A
common entry point into these compilation frameworks is the quadratic
unconstrained binary optimization (QUBO) problem. Given variables $x_1, \dots,
x_n$ where $x_i \in \{0, 1\}$ and constants $c_{ij} \in \mathbb{R}$, the
QUBO problem is to compute

\begin{equation*}
\label{eq:qubo}
    \displaystyle \argmin \sum_{i \leq j} c_{ij} x_i x_j.
\end{equation*}

QUBO has become a standard input format for quantum annealers, similar to the
linear program format used in efficient classical solvers such as CPLEX. Many
constrained optimization problems can be converted directly to QUBO form
\cite{boros2002pseudo}.

A QUBO can be converted directly into a graph $P$ with vertices $V(P) = \{x_1,
\dots, x_n\}$, edges $E(P) = \{(x_i, x_j) ~|~ i \neq j, c_{ij} \neq 0\}$,
vertex weights $c_{ii}$, and edge weights $c_{ij}$ for $i \neq j$. Viewing a
QUBO as a graph is particularly useful when selecting sets of physical qubits
to represent the QUBO variables, since this assignment is known as graph
\emph{minor embedding}:

\begin{definition}[Minor Embedding]
Given two graphs $P$ and $H$, a minor embedding of $P$ into $H$ is a function
$\phi : V(P) \to \mathcal{P}(V(H))$ that assigns each vertex in $P$ to a vertex
set from $V(H)$ such that the following properties hold:
\begin{enumerate}[noitemsep, topsep=3pt]
\item Vertex sets cannot overlap: $\phi(u) \cap \phi(v) = \emptyset$ for all
distinct $u, v \in V(P)$.
\item Vertex sets induce connected subgraphs: $H[\phi(u)]$ is connected for
every $u \in V(P)$.
\item Edges are represented: $(u, v) \in E(P) \to (u', v') \in E(H)$ for some
$u' \in \phi(u)$ and $v' \in \phi(v)$.
\end{enumerate}
\end{definition}

\noindent From a graph-theoretic perspective, this embedding defines the vertex
deletions and edge contractions necessary to find $P$ as a minor of $H$. From
the physics perspective, this embedding assigns an appropriate set of
\emph{physical qubits} to collectively represent a \emph{logical qubit}, and
QUBO weights are adjusted for this embedding by distributing each logical
qubit's weight over its vertex sets' physical qubits \cite{choi2008minor}.
Hence, compiling a QUBO into AQC hardware reduces to the problem of finding a
minor embedding.

The problem of finding a minor embedding is NP-hard for general graphs,
witnessed with a trivial reduction from \problemfont{SubgraphIsomorphism}.The
most famous minor-embedding result comes from the Robertson-Seymour graph
minor theory \cite{robertson1995graph}, which implies that there is a
polynomial-time algorithm for finding an embedding of a fixed problem graph
into any potential hardware graph. However, this algorithm assumes the size of
the problem graph is a constant and uses it exponentially, therefore the result
is not expected to yield practical embedding algorithms Choi notes that a
similar problem has been previously studied in parallel computing
\cite{leighton2014introduction} where a job needs to be distributed over a
cluster's nodes, but existing results are incompatible with the requirement of
a graph minor embedding \cite{choi2011minor}.

In addition to constructing a minor embedding, in practice we also want to
\emph{tune} the embedding to have beneficial graph properties. Finding an
embedding with a minimum hardware footprint, measured in qubits, would be
preferable to more wasteful embeddings. Experimental evidence also suggests
that large vertex sets lead to poor solutions in practice, so minimizing the
diameter of each vertex set's induced subgraph is desirable. Thus, in addition
to the NP-hard problem of generating a single embedding, we are also interested
in searching over the space of embeddings.

Examples of prior application-to-Ising-Model compilations include Lucas's
formulation of Karp's 21 NP-hard problems \cite{lucas2014ising}, NASA's rover
missions \cite{rieffel2015case, venturelli2015quantum}, applications in
computational chemistry \cite{kassal2011simulating}, and computer vision
\cite{neven2008image}.

\subsection{Related Work}
The notion of minor embedding QUBO problems into Chimera \chimera~hardware was
first introduced by Choi in 2008 \cite{choi2008minor}. Choi later
provided the first general purpose embedding algorithm \cite{choi2011minor},
\algorithmfont{TRIAD}, which embedded $K_{LN}$ (assuming $N \leq M$) into a
triangular portion of the D-Wave hardware. This embedding trivially provides
embeddings for all graphs of at least $LN$ vertices; however, no tuning
mechanism is provided to reduce the hardware footprint for problems with less
edges.

Klymko et al. \cite{klymko2014adiabatic} extended this work by providing an
alternative embedding algorithm for $K_{LN+1}$. While \algorithmfont{TRIAD}
could be extended for this extra vertex set, it unnaturally used all
remaining qubits. The embedding provided by Klymko et al. shifted these qubits
around such that all the vertex sets are (roughly) balanced. Klymko et al.
showed that this balanced embedded also proved resilient to hardware instances
with hard faults (missing qubits). Finally, the authors also introduced the
notion of QUBO rejection using structural graph properties. Specifically,
Klymko et al. showed that QUBOs with treewidth larger than $L(N+1)$ cannot be
embedded in \chimera. While treewidth is NP-complete to compute, Wang et al.
provided a linear-time approximation for problems based on Ollivier-Ricci
curvature \cite{wang2014ollivier}.

While the algorithms from \cite{choi2011minor} and \cite{klymko2014adiabatic}
ran in constant time and guaranteed an embedding, Cai et al.
\cite{cai2014practical} took a different approach by providing greedy
heuristics for embedding arbitrary QUBOs into arbitrary hardware graphs.
Experimental results provided by the authors show that, for very sparse graphs
such as 3-regular and grid graphs, the algorithm succeeded in embedding larger
QUBOs than previous embedding algorithms, while also using less qubits. This
so-called \emph{CMR algorithm} is the basis for the embedding algorithm
provided in the D-Wave API \cite{sapi}.

Most recently, Boothby et al. \cite{boothby2016fast} generalized the
\algorithmfont{TRIAD} embedding into a class of \emph{native clique embeddings}
for $K_{LN}$. They show that, unlike the \algorithmfont{TRIAD} embedding,
exponentially-many native clique embeddings exist in a given Chimera graph,
making it possible to construct one that avoids hard faults. Additionally, they
provide a polynomial-time dynamic programming algorithm for computing the
maximum native clique possible in a Chimera graph with hard faults.

\section{Virtual Hardware Framework}

At the core of our framework is a \emph{virtual hardware layer} created to
provide a cleaner interface for finding minor embeddings. The introduction of
this intermediary representation splits the minor embedding process into two
phases:

\begin{enumerate}
\item Find the initial embedding. Starting with a virtual hardware template
that allocates physical resources, find a virtual embedding function into the
virtual hardware.
\item Iteratively tune the embedding. After obtaining an initial embedding,
apply reduction routines to tune both the virtual embedding function and
virtual hardware to adjust physical hardware resource usage.
\end{enumerate}

Fig. \ref{figure:framework} illustrates this iteration. Provided with an initial
virtual hardware template $\virtualhardware$ and its embedding $\psi$ into the
physical hardware, a virtual embedding $\phi$ is sufficient for finding a valid
minor embedding of the QUBO into the physical hardware. By iterating reduction
subroutines, a sequence of improved embeddings
$(\phi, \virtualhardware, \psi) \to (\phi', \virtualhardware', \psi') \to (\phi'', \virtualhardware'', \psi'')$
each produce a full embedding with reduced hardware usage.

\begin{figure}[!h]
\centering
\includegraphics[width=\textwidth]{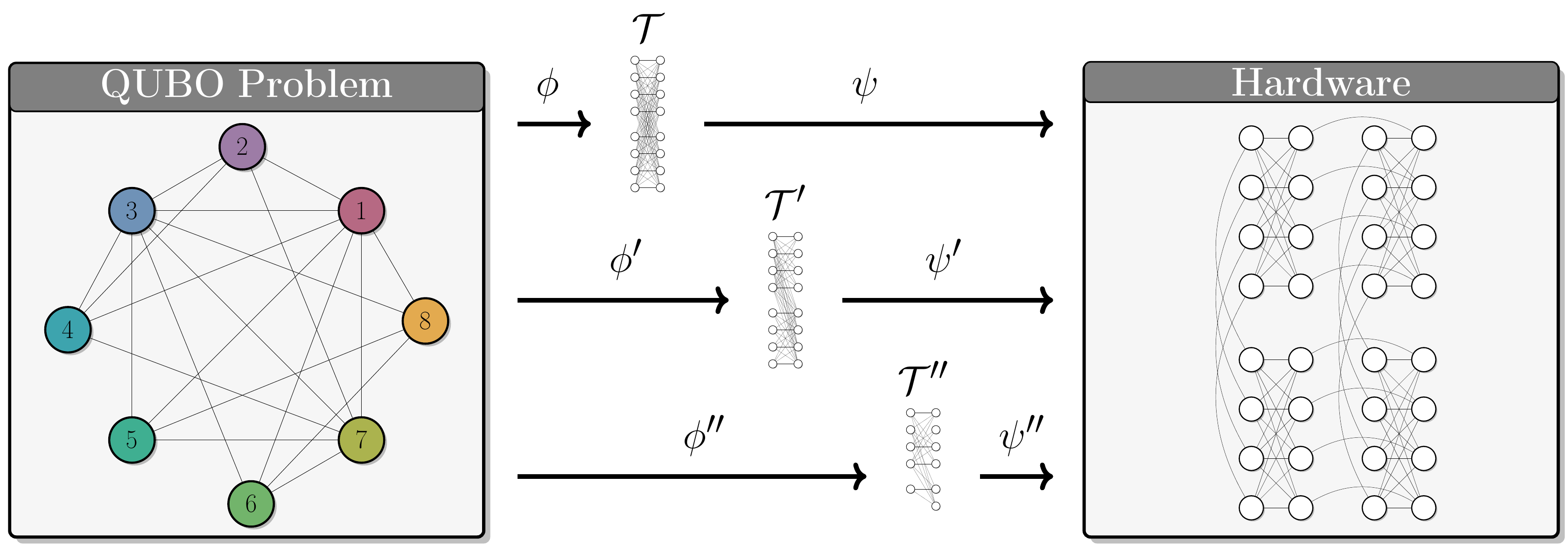}
\caption{A high-level overview of the virtual hardware framework, including the
iterative tuning of the virtual embedding $\phi$, virtual hardware template
$\virtualhardware$, and the physical embedding $\psi$.}
\label{figure:framework}
\end{figure}

\noindent Formally, we assume the problem is formulated as a graph $\program$
and the hardware layout as a graph $\hardware$. The \emph{virtual hardware
template} $\virtualhardware$ is a graph embeddable into $\hardware$. This
embedding denotes an allocation of qubits in $\hardware$ into virtual qubits in
$\virtualhardware$, encoded with a
\emph{physical embedding function} $\psi : V(\virtualhardware) \to \mathcal{P}(V(\program))$.
For bookkeeping, we require that each edge in $\virtualhardware$ represents
exactly one edge in $H$ -- therefore removing edges in the virtual hardware has
a corresponding meaning on the physical hardware footprint. Since we want to
define a virtual hardware template that scales with the physical hardware, we
define virtual hardware templates in terms of \emph{families}:

\begin{definition}[Chimera-Compatible Virtual Hardware Template Family]
A virtual hardware template family $\mathcal{F}$ is a set of virtual hardware graphs defined with a corresponding family of physical hardware
embedding functions $\Psi$, such that for all $L, M, N \in \mathbb{Z}^+$, there
exists $\psi \in \Psi$ and $\virtualhardware \in \mathcal{F}$
such that $\psi$ minor embeds $\virtualhardware$ into \chimera.
\end{definition}

Finding a \emph{virtual embedding function} $\phi : V(\program) \to
\mathcal{P}(V(\virtualhardware))$ is sufficient for finding the initial
embedding $\chi : V(P) \to \mathcal{P}(V(H))$, which can be constructed by
letting $\chi(u) = \bigcup_{x \in \phi(u)} \psi(x)$. We compute this virtual
embedding function with an \emph{embedding subroutine}:

\begin{definition}[Embedding Subroutine]
An embedding subroutine takes as input a problem graph $\program$ and virtual
hardware $\virtualhardware$, and outputs a virtual embedding
$\phi : V(\program) \to \mathcal{P}(V(\virtualhardware))$ or the keyword
\texttt{FAIL}.
\end{definition}

After finding a full minor embedding function, we then apply
\emph{reduction subroutines} to produce tuned embeddings:

\begin{definition}[Reduction Subroutine]
A reduction subroutine takes as input a problem $P$, virtual hardware
$\virtualhardware$, and virtual embedding $\phi$, then outputs an updated virtual hardware $\virtualhardware'$ and virtual embedding $\phi'$ (potentially
identical to $\virtualhardware$ and $\phi$).
\end{definition}

After reduction subroutines are applied, an updated physical embedding function
$\psi'$ can be recovered from the original $\psi$ and the final virtual
hardware $\virtualhardware'$ by using only the physical qubits needed to
represent the edges in $\virtualhardware'$. Again, we have a full embedding
$\chi'$ of the problem into the physical hardware by combining $\phi'$ and
$\psi'$.

\subsection{Biclique Virtual Hardware}
We now present an implementation of this framework using a
\emph{biclique virtual hardware}, an embedding subroutine based on both Choi's
TRIAD and Klymko et al.'s embedding algorithm, and provide two reduction
subroutines for minimizing the total number of qubits used. We start with the
virtual hardware:

\begin{definition}[Biclique Virtual Hardware Template Family]
A \chimera~ hardware contains a biclique $K_{LM, LN}$ virtual hardware $\virtualhardware$ with partite sets
$L(\virtualhardware) = \{v_1, \dots, v_{LM}\}$ and
$R(\virtualhardware) = \{h_1, \dots, h_{LN}\}$; we refer to these as the
\emph{vertical} and \emph{horizontal} partite sets, respectively.
The embedding function defining the minor embedding is given by
\begin{align*}
\psi(v_i) &= \{(j, \lceil i / L \rceil, 1, i \bmod L) ~|~ 1 \leq j \leq M\},
\text{ and }\\
\psi(h_i) &= \{(\lceil i / L \rceil, j, 2, i \bmod L) ~|~ 1 \leq j \leq N\}.
\end{align*}
\end{definition}

\begin{figure}[t]
\centering
\includegraphics[width=\textwidth]{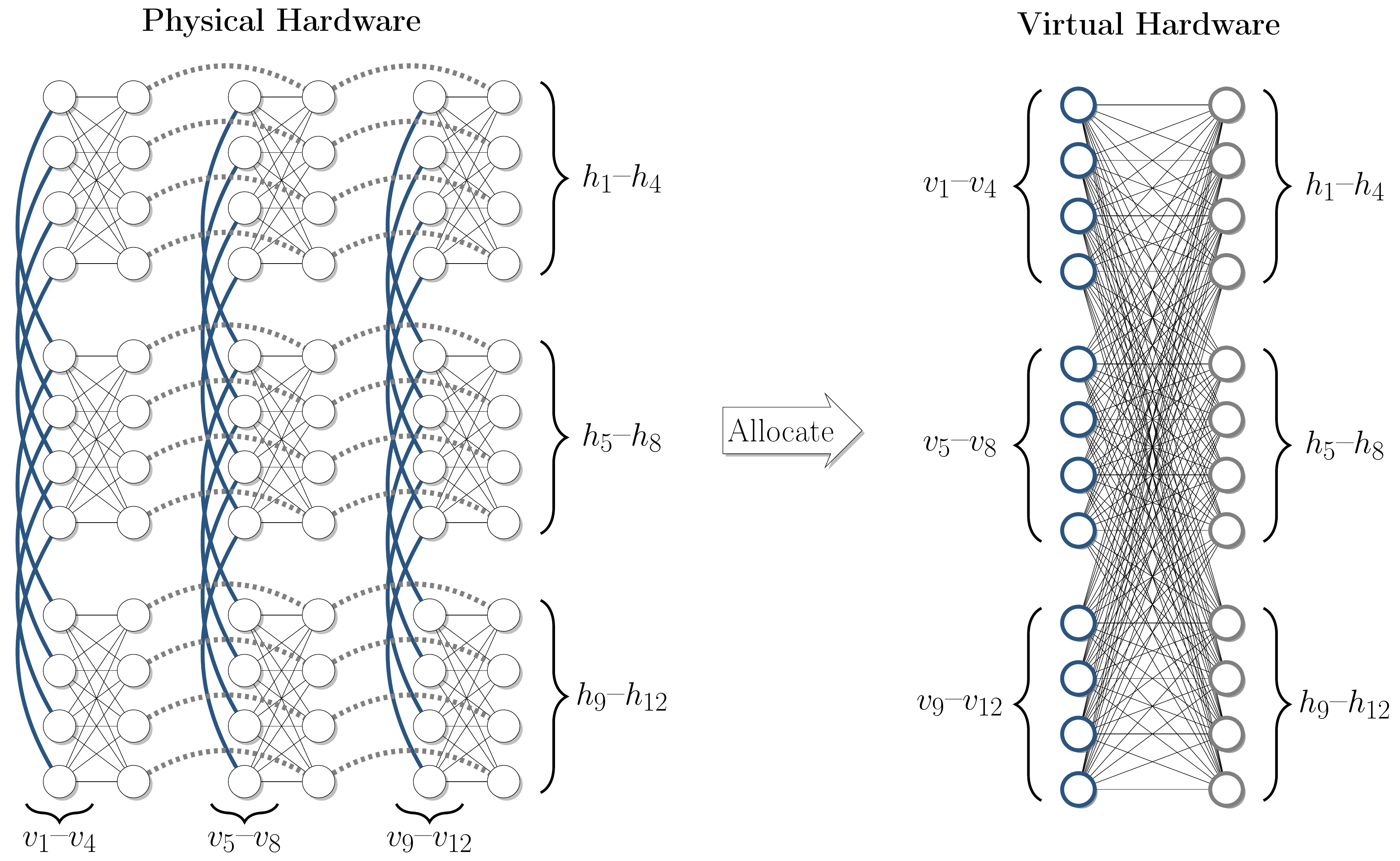}
\caption{The $K_{12,12}$ biclique virtual hardware for Chimera(4, 3, 3). Thick
blue edges show allocations to vertical vertex sets, and dashed gray edges show
the horizontal vertex set allocations.}
\label{figure:virtual_hardware}
\end{figure}

The intuition behind this allocation is a
partitioning of the edges in Chimera graphs (c.f. Fig.
\ref{figure:virtual_hardware}). There are three such edge types --
intra-cell, vertical inter-cell, and horizontal inter-cell -- and the
inter-cell edges provide the highest connectivity increase per minor
contraction. Therefore allocating maximal vertical and horizontal paths
provides a virtual hardware with relatively large degree per vertex.

The biclique virtual hardware is fairly robust to physical hardware
specifications by not requiring a square Chimera grid like previous algorithms,
nor depending on the fact that $L=4$ in existing hardware implementations. A
biclique virtual hardware can also be allocated from a hardware implementation
with hard faults; however, in the naive allocation we find that each missing
qubit removes a full vertical or horizontal path. Managing hardware
implementations with hard faults is less a concern than in prior work, with
more mature hardware yields and the introduction of software post-processing
methods for emulating missing qubits (e.g. the Full-Yield Chimera Solver
provided in D-Wave SAPI 2.4
\cite{sapi}).

\subsection{Biclique Embedding and Reduction Subroutines}
In this subsection we develop a baseline set of embedding and reduction
subroutines utilizing the biclique virtual hardware template. We start by
providing an embedding for a complete graph on $\min(LM, LN)$ vertices. At a
high level the embedding assignment is straightforward: a single virtual qubit
in the vertical partite has edges to every virtual qubit in the horizontal
partite, and vice-versa. Therefore, to ensure that every two problem vertices
are joined by an edge, we map each problem vertex to a pair of virtual qubits:

\begin{subroutine}[\algorithmfont{Native-Embed}]
Given a problem graph $\program$ with $V(\program) = \{u_1, \dots, u_n\}$ where
$n \leq \min(LM, LN)$ and a biclique virtual hardware $\virtualhardware$ with
partites $L(\virtualhardware) = \{v_1, \dots, v_{LM}\}$ and
$R(\virtualhardware) = \{h_1, \dots, h_{LN}\}$, \algorithmfont{Native-Embed}
produces an embedding $\phi$ by mapping $\phi(u_i) = \{v_i, h_i\}$ for
$1 \leq i \leq n$.
\end{subroutine}

As defined, \algorithmfont{Native-Embed} redundantly has two edges between
every pair of vertex sets $\phi(u_i)$ and $\phi(u_j)$, for $i \neq j$; namely, the
edges $(v_i, h_j)$ and $(v_j, h_i)$. Recall that we defined $\psi$ such that
each edge in $\virtualhardware$ represents a unique edge in $\hardware$, so
this redundancy in the virtual hardware represents an actual redundancy in the
physical embedding. To gauge the wastefulness, we \emph{score} a virtual
hardware and its virtual embedding:

\begin{subroutine}[\algorithmfont{Qubit-Scoring}]
Suppose we are given standard input $\program$, $\virtualhardware$, and $\phi$.
For each virtual qubit $v_i \in L(\virtualhardware)$, let
$I_{v_i} = \{j ~|~ (v_i, h_j) \in E(\virtualhardware)\}$ be
its \emph{index set} -- the range of neighbors it has on the virtual hardware.
Define the score for each left partite vertex as

\begin{align*}
\text{\algorithmfont{score}}(v_i) = 1 + \left\lfloor \frac{max(I_{v_i})}{L}
\right\rfloor - \left\lfloor \frac{min(I_{v_i})}{L} \right\rfloor.
\end{align*}

Each $h_i$ is assigned an index set and score analogously.
Then the qubit score for $\phi$ and $\virtualhardware$ is
\begin{align*}
\sum_{v_i \in L(\virtualhardware)} \text{\algorithmfont{score}}(v_i) + \
\sum_{h_i \in R(\virtualhardware)} \text{\algorithmfont{score}}(h_i).
\end{align*}
\end{subroutine}

At a high level, \algorithmfont{Qubit-Scoring} computes the number of physical
qubits that must be used with the current virtual hardware and virtual
embedding. If removing a redundant edge reduces the score, then we have also
reduced physical hardware usage. If removing a redundant edge does \emph{not}
affect the score, then we know that this particular edge is not requiring extra
hardware usage by itself; however, a sequence of non-score-reducing redundant
edge removals could potentially reduce the score. Therefore, it is non-trivial
to identify which of the redundant edges should be removed for optimal hardware
resource minimization.

Based on this observation, we provide two evaluation methods for computing
virtual hardware minimization. First, \algorithmfont{Qubit-Evaluation} computes
all possible redundant edge removals and chooses the one with minimum score;
this calculation is exponential in the number of redundant edges. A faster
evaluation method \algorithmfont{Fast-Qubit-Evaluation} greedily keeps the
lexicographically-first edge, providing a \emph{minimal} (but not necessarily
minimum) score in linear time.

\begin{subroutine}[\algorithmfont{Qubit-Evaluation}]
Suppose we are given standard input $\program$, $\virtualhardware$, and $\phi$. Let $S$ be the set of problem vertices mapped to at
least one virtual qubit on each partite, let $\mathcal{E}$ be the set of all
edge sets $E$ on the virtual hardware such that for each $u, v \in S$, there is
exactly one edge $(u', v') \in E$ with $u' \in \phi(u)$ and $v' \in \phi(v)$.
Then \algorithmfont{Qubit-Evaluation} returns
$\argmin_{E \in \mathcal{E}} \text{\algorithmfont{Qubit-Scoring}(E)}$.
\end{subroutine}

\begin{subroutine}[\algorithmfont{Fast-Qubit-Evaluation}]
Suppose we are given standard input $\program$, $\virtualhardware$, and $\phi$.
Let $S$ be the set of problem vertices mapped to at
least one virtual qubit on each partite. Then
\algorithmfont{Fast-Qubit-Evaluation} returns
$E = \{(v_i, h_j) ~|~ i \leq j,  ~v_i \in \phi(x), ~h_j \in \phi(y) \text{ for } x,y \in V(P) \text{ and } y \in N(x)\}$.
\end{subroutine}

The last step is to use this reduced edge set to construct a reduced virtual
hardware, computed using \algorithmfont{Qubit-Reduce}:

\begin{figure}[h!]
\centering
\includegraphics[width=\textwidth]{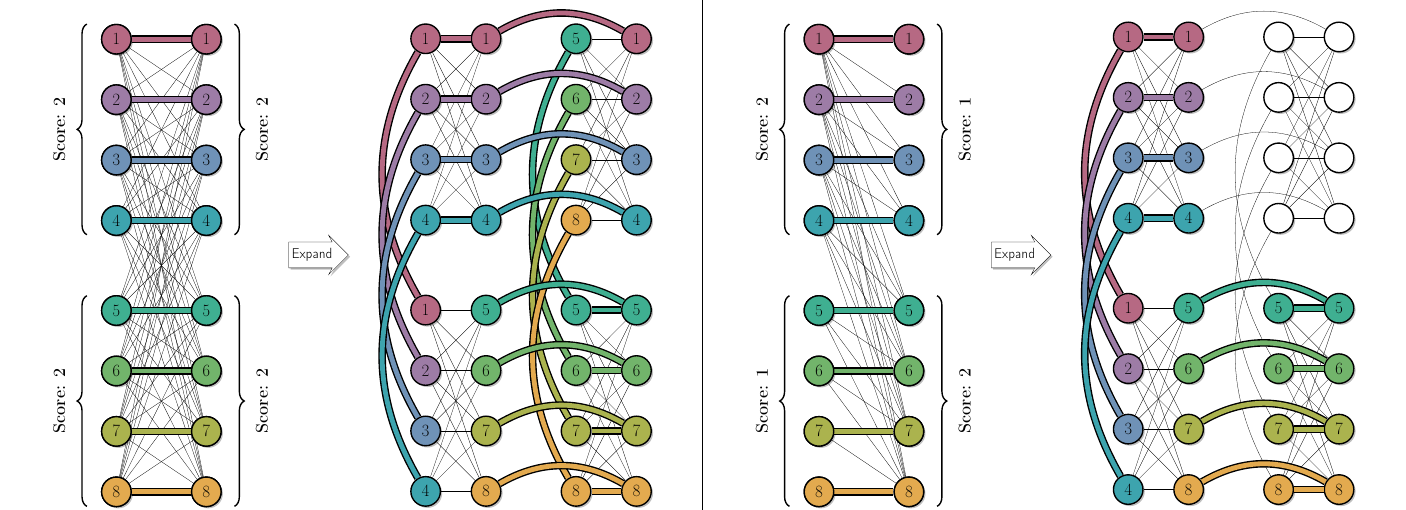}
\caption{(Left) The \algorithmfont{Native-Embed} embedding with ``+''-shaped
vertex sets; (Right) The embedding reduced by \algorithmfont{Qubit-Reduce},
with ``L''-shaped vertex sets.}
\label{figure:vhscoring}
\end{figure}

\begin{subroutine}[\algorithmfont{Qubit-Reduce}]
Given standard input $\program$, $\virtualhardware$, $\phi$ and an evaluation
subroutine, \algorithmfont{Qubit-Reduce} computes a set of redundant edges to
be removed $E$, and outputs the current virtual embedding $\phi$ and a new
virtual hardware $\virtualhardware'$ with vertices $V(\virtualhardware)$ and
edges $E(\virtualhardware) - E$.
\end{subroutine}

While fairly simple, \algorithmfont{Qubit-Reduce} has the potential to reduce
qubit usage by 50\%. This ratio occurs when \algorithmfont{Native-Embed}'s
``+''-shaped vertex sets on the physical hardware are reduced to ``L''-shaped
vertex sets (as described by Boothby et al. \cite{boothby2016fast}). Fig.
\ref{figure:vhscoring} visualizes this reduction.

Up to this point, we have implicitly assumed that the problem graph
was complete (i.e. we needed to enforce every edge). However, we can achieve
further hardware resource reduction by assuming that the problem is missing
edges. Specifically, by shuffling the assignment of vertex sets on the biclique
virtual hardware, we can group together those vertices with edges between them,
resulting in shorter vertex sets. This computation can be done with a scheme of
subroutines, \algorithmfont{$k$Exchange-Reduce}. In local search terminology,
we compute a deterministic gradient descent on the $k$-exchange neighborhood
without restarts.

\begin{subroutine}[\algorithmfont{$k$Exchange-Reduce}]
Given standard input $P, \virtualhardware, \phi$, and \emph{neighborhood
exchange parameter} $k \geq 2$,
the subroutine \algorithmfont{$k$Exchange-Reduce} computes a new virtual
embedding $\phi'$ with the following steps:
\begin{enumerate}
\item Let $\phi' = \phi$.
\item Starting from $\phi'$, compute all $\binom{n}{k}k!$ ways to reassign
exactly $k$ problem vertices in each partite, and score each qubit
reassignment. (For example, if $\phi(u_1) = \{v_1, h_1\}$ and
$\phi(u_2) = \{v_2, h_2\}$, then their 2-exchange on the left partite is
$\phi(u_1) = \{v_2, h_1\}$ and $\phi(u_2) = \{v_1, h_2\}$).
\item Let $\phi'$ be the reassignment with the lowest score.
\item Repeat until no $k$-exchange leads to a score reduction, and return
$\phi'$ and $\virtualhardware$.
\end{enumerate}
\end{subroutine}

For a fixed $k$, run time for \algorithmfont{$k$Exchange-Reduce} is
$\binom{n}{k}k! = O(n^k)$ per iteration with a maximum of $L^2(M+N)$
iterations. With the standard assumptions that $L$ is a constant and
$\sqrt{n} = \max(M,N)$, \algorithmfont{$k$Exchange-Reduce} has a run time of
$O(n^{k+1/2})$.

\subsection{Emulation and Enhancement}
Applying the tools introduced in the last subsection, we can emulate Choi's
$K_{LN}$ \algorithmfont{TRIAD} algorithm \cite{choi2011minor} with \algorithmfont{Native-Embed} and
\algorithmfont{Qubit-Reduce}. Klymko et al.'s $K_{LN+1}$ embedding
\cite{klymko2014adiabatic} can be found by tweaking
\algorithmfont{Native-Embed} to embed $u_1, \dots, u_{LN-1}$ as usual, but also
setting $\phi(u_{LN}) = \{v_{LN}\}$ and $\phi(u_{LN+1}) = \{h_{LN}\}$. We note
that doing so forces the first $LN-1$ vertex sets to cover both the vertical and
horizontal partites in order to be adjacent to the last two vertices, therefore
applying \algorithmfont{Qubit-Reduce} has a limited effect and is not
recommended for general use.

One advantage of emulating existing algorithms in this framework is for the
application of virtual hardware-specific reduction subroutines; namely,
\algorithmfont{Qubit-Reduce} and \algorithmfont{$k$Ex-Reduce}. In Section
\ref{section:comparing_oct} we see that the subroutines do in fact produce
smaller embeddings without unreasonably increasing run times.

\subsection{Summary}
In this section we defined a biclique virtual hardware formed naturally from
the Chimera graph by exploiting its grid-like structure and high intra-cell
connectivity.

We also defined a full baseline stack of embedding and reduction subroutines.
As noted in the last subsection, this framework is sufficient for emulating the
best existing algorithms for dense problem graphs in hardware layouts without
faults. Furthermore, we can apply additional reduction routines to achieve
reduced embedding footprints. In total, these results serve as a full
proof-of-concept motivating the use of virtual hardware and the development of
specialized and modular subroutines. In the next section we take the next step
and move beyond existing embedding algorithms by exploiting the bipartite
structure in problem graphs to tackle larger, more sparse problems.

\section{Utilizing Bipartite Problem Structure}
In the last section we emphasized the structural properties of the Chimera
hardware graph, deriving the biclique virtual hardware and its subroutines. In
this section we utilize bipartite structure from the problem graph.
Specifically, we use the notion of \emph{odd cycle transversals} to decompose
problem graphs and extract a maximal bipartite induced subgraph. We start by
defining the odd cycle transversal and its limitations in the Chimera graph,
then describe an initial embedding subroutine \algorithmfont{OCT-Embed}, and
finally propose a faster heuristic, \algorithmfont{Fast-OCT-Embed}.

\subsection{Odd Cycle Transversal}
One metric for gauging the ``bipartite-ness'' of a graph $G$ is the smallest
set of vertices preventing $G$ from being bipartite, a minimum odd cycle
transversal:

\begin{definition}[Odd Cycle Transversal (OCT)]
The \emph{odd cycle transversal} of a graph $G$ is a set of vertices $S$ such
that $G\setminus S$ is a bipartite graph. We denote the size of a minimum OCT
set as the OCT number, $OCT(G)$, and the problem of computing $OCT(G)$ as
\problemfont{MinOCT}.
\end{definition}

Unfortunately, \problemfont{MinOCT} is NP-hard \cite{lewis1980node} and does
not have a constant factor approximation algorithm unless P = NP
\cite{lund1993approximation}. However, the problem is fixed-parameter tractable
(FPT) when parameterized by the natural parameter (solution size). In other
words, graphs with small OCT numbers will also have quickly-computable OCT
numbers, regardless of total graph size. Given that the biclique virtual
hardware is most efficiently utilized when embedding problem graphs with small
OCT numbers, we expect embeddable problem graphs will have an efficiently
computable OCT decomposition. As a baseline we use Reed et al.'s $O(3^{k}kmn)$
algorithm for computing solutions of size $k$, which is known to have several
simplifications and optimizations \cite{lokshtanov2009simpler,
huffner2005algorithm}. Other algorithms for specialized instances also exist
\cite{agarwal2005approximation, lokshtanov2012subexponential}.

We note that \problemfont{MinOCT} and the problem of computing the
size of the maximum bipartite induced subgraph (denoted by
\problemfont{MaxBipartite}) are complements, in the sense that an exact
solution to one problem also provides a solution to the other. However, an
\emph{approximation} for one problem is not an approximation for the other, so
some care must be taken when choosing which problem to approximate.

\subsection{OCT and the Chimera Graph}
In prior work, Klymko et al. showed that the Chimera graph has treewidth
bounded by $O(LN)$, assuming $N \leq M$ \cite{klymko2014adiabatic}. In this
section we show that the maximum $OCT(G)$ over all Chimera-embeddable graphs
$G$ is bounded by all three Chimera parameters. First, we note that treewidth
and OCT describe different graph structure:

\begin{proposition}
The treewidth of a graph is independent of its OCT number.
\end{proposition}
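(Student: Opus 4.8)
The plan is to establish independence in the strong sense: neither parameter is bounded by any function of the other. Concretely, I would exhibit two graph families, one witnessing that bounded treewidth permits unbounded OCT number, and one witnessing that bounded OCT number permits unbounded treewidth.

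For the first direction, take $G_k$ to be the disjoint union of $k$ triangles. Each triangle has treewidth $2$, and treewidth of a disjoint union is the maximum over components, so $\mathrm{tw}(G_k) = 2$ for all $k$. On the other hand, every triangle is an odd cycle and the triangles share no vertices, so any OCT set must contain at least one vertex from each triangle; hence $OCT(G_k) = k$. Thus a graph of treewidth $2$ can have arbitrarily large OCT number.

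For the second direction, take $H_k$ to be the $k \times k$ grid graph (equivalently one could use $K_{k,k}$). Grid graphs are bipartite, so $OCT(H_k) = 0$, yet it is standard that $\mathrm{tw}(H_k) = k$ (and $\mathrm{tw}(K_{k,k}) = k$), which is unbounded. So a graph with OCT number $0$ can have arbitrarily large treewidth. Combining the two families, there is no monotone relationship in either direction, which is the claimed independence.

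I do not expect a genuine obstacle here; the only care needed is (i) to state precisely what ``independent'' means so that the two witnessing families actually suffice, and (ii) to invoke the well-known treewidth values for grids (or complete bipartite graphs) and the elementary fact that treewidth of a disjoint union equals the maximum component treewidth. Both facts are classical and can be cited rather than reproved.
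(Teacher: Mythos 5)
Your proof is correct and takes essentially the same approach as the paper: grid graphs (bipartite, unbounded treewidth) for one direction, and a collection of vertex-disjoint triangles for the other, where the paper merely attaches the triangles to a tree to keep the example connected rather than taking a disjoint union. No substantive difference.
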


\begin{proof}
Consider two families of graphs:
\begin{enumerate}
\item The class of grid graphs. These graphs are known to have treewidth
proportional to the smallest grid dimension \cite{diestel2005graph}, but have
an OCT number of 0 since they are bipartite.
\item The class of trees with their leaves replaced with triangles. These
graphs have treewidth at most three (a tree decomposition exists where each bag
contains at most a triangle and its neighbor in the tree), but unbounded OCT
number since each (disjoint) triangle contains at least one OCT vertex.
\end{enumerate}
We have shown that one property cannot bound the other, therefore they are
independent. \smartqed \qed
\end{proof}

With this independence established, we proceed to show upper and lower bounds
on the maximum OCT number of a Chimera-embeddable graph.

\begin{lemma}
$OCT(G) \leq \min(|L(B)|, |R(B)|)$ for all minors $G$ of a bipartite graph $B$
with vertex partite sets $L(B)$ and $R(B)$.
\end{lemma}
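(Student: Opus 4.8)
The plan is to argue directly from the branch-set description of minors implicit in the minor-embedding definition. If $G$ is a minor of the bipartite graph $B$, fix a \emph{model}: pairwise-disjoint, nonempty, connected vertex sets $\{\beta(v) : v \in V(G)\}$ in $B$ such that every edge $uv \in E(G)$ is witnessed by some edge of $B$ having one endpoint in $\beta(u)$ and the other in $\beta(v)$. Assume without loss of generality that $|L(B)| \le |R(B)|$, so it suffices to exhibit an OCT set of $G$ of size at most $|L(B)|$. The candidate is the natural one: let $S = \{v \in V(G) : \beta(v) \cap L(B) \neq \emptyset \text{ and } \beta(v) \cap R(B) \neq \emptyset\}$, i.e.\ the vertices of $G$ whose branch sets straddle the bipartition of $B$.

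First I would check that $S$ is an OCT set. For each $v \notin S$, the branch set $\beta(v)$ lies wholly in $L(B)$ or wholly in $R(B)$, so color $v$ by whichever side it occupies. Given an edge $uv$ of $G \setminus S$, its witnessing edge of $B$ has one endpoint in $L(B)$ and one in $R(B)$ because $B$ is bipartite; the $L(B)$-endpoint forces the branch set containing it (and hence that vertex of $G$, since it is not in $S$) to the ``left'' color, and symmetrically the $R(B)$-endpoint forces the other vertex to the ``right'' color, so $u$ and $v$ receive opposite colors. Thus $G \setminus S$ is properly $2$-colored and bipartite. Next I would bound $|S|$: for $v \in S$ the sets $\beta(v) \cap L(B)$ are nonempty, and since the branch sets are disjoint these are pairwise-disjoint subsets of $L(B)$, so $|S| \le |L(B)|$; the symmetric count against $R(B)$ gives $|S| \le |R(B)|$. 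Hence $OCT(G) \le |S| \le \min(|L(B)|,|R(B)|)$.

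I do not anticipate a genuine obstacle; essentially all the content is in choosing $S$ correctly, after which both claims are one-liners. The points worth double-checking are that the coloring argument never uses connectivity of the branch sets (only their disjointness, the edge-witnessing property, and bipartiteness of $B$), and that taking a model with nonempty branch sets avoids any degenerate case — if one instead starts from an arbitrary minor-embedding function whose image on an isolated vertex may be empty, such vertices are simply assigned either color and do not affect the bound. It may also be worth remarking, when the lemma is invoked later, on whether $\min(|L(B)|,|R(B)|)$ is tight, since each vertex of $S$ in fact consumes a vertex on \emph{both} sides of $B$.
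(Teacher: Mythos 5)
Your proof is correct, and it takes a slightly different (and in fact sharper) route than the paper's. The paper chooses $S$ to be \emph{every} vertex of $G$ whose branch set meets the smaller partite set, say $L(B)$: disjointness of branch sets gives $|S| \leq |L(B)|$ immediately, and the remaining branch sets lie entirely inside $R(B)$, which is an independent set, so $G \setminus S$ is in fact edgeless (each such branch set must be a single vertex of $R(B)$ by connectivity) and hence trivially bipartite. You instead remove only the vertices whose branch sets \emph{straddle} both sides and then genuinely $2$-color the remainder by which side each surviving branch set occupies, using bipartiteness of $B$ to check that every witnessed edge goes between the two colors. Your $S$ is a subset of the paper's, so your argument proves a marginally stronger statement --- $OCT(G)$ is at most the number of branch sets meeting both partite sets, which is simultaneously at most $|L(B)|$ and $|R(B)|$ without any WLOG --- at the cost of a slightly longer verification; the paper's choice buys a one-line bipartiteness check because nothing with an edge survives the deletion. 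Your side remarks are also accurate: the coloring step uses only disjointness, the edge-witnessing property, and bipartiteness of $B$ (not connectivity of branch sets), and empty branch sets for isolated vertices are harmless. Either argument establishes the lemma as stated, and both specialize to the corollary $OCT(G) \leq LMN$ for Chimera-embeddable graphs in the same way.
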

\begin{proof}
Let $\phi$ be a minor embedding of $G$ into $B$, and without loss of generality,
let $L(B)$ be the smaller of the two partite sets. Let $S = \{x ~|~ x \in V(G)
\text{ and } u \in \phi(x) \text{ for } u \in L(B)\}$, then we know that $|S|
\leq |L(B)|$. $V(G) - S$ is necessarily bipartite since $\phi(x)$ is composed
of vertices from $R(B)$ for $x \in V(G)-S$, therefore
$OCT(G) \leq |S| \leq |L(B)|$. \smartqed \qed

\end{proof}
\begin{corollary}
$OCT(G) \leq LMN$ for all Chimera-embeddable graphs $G$.
\end{corollary}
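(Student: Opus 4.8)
The plan is to observe that \chimera{} is itself a bipartite graph with both partite sets of size $LMN$, so that the Corollary follows immediately from the Lemma applied with $B = \mathcal{C}_{L,M,N}$: every Chimera-embeddable graph $G$ is, by definition, a minor of \chimera{}, whence $OCT(G) \le \min(|L(B)|, |R(B)|) = LMN$.

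So the only real content is establishing that \chimera{} is bipartite, which I would do by exhibiting an explicit proper $2$-coloring. Using the Chimera qubit labels, I would color the qubit $(\ell_r, \ell_c, \ell_p, \ell_h)$ by the parity of $\ell_r + \ell_c + \ell_p$. Recalling that a Chimera graph has exactly three edge types, I would verify that each is bichromatic: an intra-cell edge joins two qubits with the same $\ell_r$ and $\ell_c$ but with differing $\ell_p \in \{1,2\}$; a vertical inter-cell edge joins two qubits with the same $\ell_c$ and $\ell_p$ whose rows $\ell_r$ differ by one; and a horizontal inter-cell edge joins two qubits with the same $\ell_r$ and $\ell_p$ whose columns $\ell_c$ differ by one. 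In each case exactly one of the three relevant coordinates changes, and it changes by an odd amount, so the parity of $\ell_r + \ell_c + \ell_p$ flips across every edge; hence the coloring is proper. For the partite sizes, fixing $\ell_r$, $\ell_c$, and $\ell_h$, the two choices $\ell_p = 1$ and $\ell_p = 2$ have opposite parity, so each of the $MN$ cells contributes exactly $L$ qubits of each color, giving $|L(B)| = |R(B)| = LMN$.

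The one step needing care is the choice of $2$-coloring. The naive attempt -- coloring one side of each $K_{L,L}$ cell by its row index and the other side by its column index -- fails on intra-cell edges, since within a cell the row and column indices bear no fixed parity relationship; folding all of $\ell_r$, $\ell_c$, and $\ell_p$ into a single parity repairs this. (A more conceptual alternative is to argue directly that every cycle has even length: along any closed walk, $\ell_p$ is toggled only by intra-cell edges, $\ell_r$ is altered only by vertical edges and only by $\pm 1$, and $\ell_c$ only by horizontal edges and only by $\pm 1$; since each of these three coordinates returns to its starting value, each of the three edge types is traversed an even number of times, so the walk has even length.)
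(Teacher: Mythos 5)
Your proof is correct and matches the paper's approach: the corollary is stated as an immediate consequence of the preceding Lemma applied with $B = \mathcal{C}_{L,M,N}$, which is bipartite with both partite sets of size $LMN$. Your explicit parity coloring by $\ell_r + \ell_c + \ell_p$ (rather than by $\ell_p$ alone, which fails on inter-cell edges) simply supplies the bipartiteness verification that the paper leaves implicit.
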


\begin{lemma}
\label{lemma:octlowerbound}
There exists a Chimera-embeddable graph $G$ such that $OCT(G) \geq (L-1)MN$.
\end{lemma}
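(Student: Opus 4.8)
The plan is to exhibit a Chimera-embeddable graph $G$ that is a disjoint union of $MN$ cliques, one produced from each $K_{L,L}$ cell of \chimera, and then to invoke the additivity of $OCT$ over connected components. The key observation driving the construction is that a \emph{single} cell already contains $K_{L+1}$ as a minor --- one vertex larger than the ``obvious'' $K_L$ one gets by contracting a perfect matching inside the cell --- and that this larger clique can be realized using only that cell's qubits, so the gadgets in distinct cells remain vertex-disjoint.

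First I would fix one cell with sides $\{a_1,\dots,a_L\}$ and $\{b_1,\dots,b_L\}$ and define $L+1$ branch sets: $B_i=\{a_i,b_i\}$ for $1\le i\le L-1$, together with $B_L=\{a_L\}$ and $B_{L+1}=\{b_L\}$. Each $B_i$ induces a connected subgraph (the pairs via the intra-cell edge $a_ib_i$), the branch sets are pairwise disjoint and exhaust all $2L$ vertices of the cell, and a short case check shows every two branch sets are joined by an intra-cell edge: $a_ib_j$ for distinct $i,j\le L-1$; $b_ia_L$ and $a_ib_L$ for the singletons; and $a_Lb_L$ for the last pair. Hence each cell independently has $K_{L+1}$ as a minor. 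Performing these contractions in all $MN$ cells simultaneously and discarding every inter-cell coupler yields a minor $G$ of \chimera consisting of exactly $MN$ vertex-disjoint copies of $K_{L+1}$.

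Second, I would compute $OCT(G)$. Because the copies are vertex-disjoint, $OCT(G)=\sum OCT(K_{L+1})$: restricting an OCT set of $G$ to one copy gives an OCT set of that copy (induced subgraphs of bipartite graphs are bipartite), and conversely the union of per-copy OCT sets is an OCT set of $G$. Finally $OCT(K_{L+1})=L-1$, since $K_{L+1}\setminus S = K_{L+1-|S|}$ is bipartite iff $|S|\ge L-1$. Therefore $OCT(G)=(L-1)MN$, which establishes the lemma (in fact with equality).

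The step requiring the most care is pinning down the branch-set assignment that squeezes $K_{L+1}$ --- rather than merely $K_L$ --- out of a lone $K_{L,L}$ cell while keeping every branch set inside that cell, so that distinct cells' gadgets stay vertex-disjoint and the component-wise additivity of $OCT$ applies verbatim. Once that gadget is in hand, the remaining estimates ($OCT$ of a disjoint union and $OCT(K_t)=t-2$) are routine. If a connected witness were preferred, one could re-introduce some inter-cell couplers to link the copies; since adding edges never decreases $OCT$ (any deletion set that bipartizes the larger graph also bipartizes the subgraph), the lower bound is unaffected.
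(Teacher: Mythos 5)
Your proof is correct and follows essentially the same route as the paper: contracting $L-1$ vertex-disjoint intra-cell edges (your branch sets $B_i=\{a_i,b_i\}$ plus the two singletons) to obtain a $K_{L+1}$ minor in each of the $MN$ cells, and then charging $L-1$ OCT vertices to each vertex-disjoint clique. Your version is just somewhat more explicit --- spelling out the branch-set adjacencies, deleting the inter-cell couplers, and noting additivity of $OCT$ over components to get equality --- but the underlying construction and counting are the paper's.
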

\begin{proof}
We construct $G$ by contracting $L-1$ vertex-disjoint edges in each cell of a
Chimera graph. Each cell is now a $K_{L+1}$ clique and $L-1$ of these vertices
must be included in an OCT set, therefore $OCT(G) \geq (L-1)MN$.
\smartqed \qed
\end{proof}

While the treewidth of Chimera graphs only grows in two dimensions ($L$ and
$\min(M, N)$), Lemma \ref{lemma:octlowerbound} shows that the minimum odd cycle
transversal will increase if $L$, $M$, or $N$ is increased. Therefore we
recommend using a \emph{minimal} odd cycle transversal as a proxy for
estimating how much hardware a problem graph's embedding will require. A
minimal OCT set is fast to compute and reflects more of the actual hardware
usage than treewidth.

In addition to gauging how much hardware a problem's embedding will require, we
can also use the minimum odd cycle transversal to recognize when certain
problems are \emph{not} embeddable. The Klymko et al.
\cite{klymko2014adiabatic} result shows that problems with treewidth larger
than $(L+1)N$ cannot be embedded, but this bound does not apply to classes of
small treewidth, such as series-parallel graphs \cite{eppstein1992parallel}.
However, the OCT rejection criterion provides a characterization of
unembeddable graphs in terms of odd cycles, therefore it encompasses a different class of graphs (including series-parallel graphs, which can have an unbounded number of odd cycles).

\subsection{Computing OCT and \algorithmfont{OCT-Embed}}
\label{section:kernelization_and_oct-embed}
As mentioned previously, the fastest-known algorithm for computing the OCT
number is exponential in the solution size, so we want to prune graphs if
possible. One method of doing that is by removing tree-like structure:

\begin{proposition}
To compute \problemfont{MinOCT} on a graph $G$, it is sufficient to compute
\problemfont{MinOCT} on the maximal 2-edge-connected subgraphs of $G$.
\end{proposition}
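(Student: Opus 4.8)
The plan is to reduce the statement to one combinatorial fact: every odd cycle of $G$ lies entirely inside a single maximal 2-edge-connected subgraph. First I would recall the standard decomposition obtained by deleting all bridges of $G$ and taking the connected components $B_1,\dots,B_t$ of the result. These components are pairwise vertex-disjoint, their vertex sets partition $V(G)$, and the non-trivial ones are exactly the maximal 2-edge-connected subgraphs of $G$ (the trivial components are isolated vertices and endpoints incident only to bridges, which are bipartite with $OCT=0$ and contribute nothing). The one point here needing a short argument is maximality: if a 2-edge-connected subgraph $B'$ of $G$ contains some $B_i$, then every edge of $B'$ lies on a cycle of $B'\subseteq G$ and hence is not a bridge of $G$, so all of $B'$ survives bridge deletion and lands in the component containing $B_i$; by maximality of components, $B' = B_i$.

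Second, I would establish the key claim. No edge of a cycle is a bridge, so any cycle of $G$ is a connected, bridge-free subgraph and therefore lies within exactly one component $B_i$. In particular, $G$ contains an odd cycle if and only if some $B_i$ does, i.e.\ $G$ is bipartite if and only if every $B_i$ is bipartite.

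Third, I would prove $OCT(G) = \sum_i OCT(B_i)$ via two inequalities. For ``$\leq$'': pick a minimum OCT set $S_i$ of each $B_i$ and set $S = \bigcup_i S_i$; since $S \cap V(B_i) = S_i$ by vertex-disjointness, each $B_i \setminus S$ is bipartite, and because every odd cycle of $G$ lies in some $B_i$, the graph $G \setminus S$ has no odd cycle; hence $S$ is an OCT of $G$ and $OCT(G) \le |S| = \sum_i OCT(B_i)$. For ``$\geq$'': let $S^{\ast}$ be a minimum OCT set of $G$; then $B_i \setminus S^{\ast}$ is a subgraph of the bipartite graph $G \setminus S^{\ast}$, hence bipartite, so $S^{\ast} \cap V(B_i)$ is an OCT of $B_i$, and again by vertex-disjointness $\sum_i OCT(B_i) \le \sum_i |S^{\ast} \cap V(B_i)| \le |S^{\ast}| = OCT(G)$. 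Concatenating optimal OCT sets of the $B_i$ therefore yields an optimal OCT set of $G$, so it suffices to run \problemfont{MinOCT} on each maximal 2-edge-connected subgraph.

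The main obstacle is bookkeeping rather than mathematical depth: pinning down that the bridge-deletion decomposition is a genuine \emph{vertex} partition (so that the OCT sizes add, and $S^{\ast} \cap V(B_i)$ is exactly the restriction of $S^{\ast}$ to that block), and that ``connected components after deleting bridges'' really do coincide with ``maximal 2-edge-connected subgraphs,'' including the correct handling of the trivial components. Once vertex-disjointness and the ``odd cycles stay inside one block'' fact are in place, both inequalities are immediate.
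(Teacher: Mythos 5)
Your proof is correct, and it takes a genuinely different route from the paper's. The paper argues by induction on the number of maximal 2-edge-connected subgraphs: it contracts the blocks to obtain the bridge tree, peels off a leaf block $S$ across its separating bridge $(v_1,v_2)$, and then shows the two partial solutions (encoded as $L$/$R$/$N$ colorings) can be made compatible, if necessary by swapping $L$ and $R$ inside $S$, which leaves $OCT(S)$ unchanged. Your argument instead avoids induction entirely: the single observation that no edge of a cycle is a bridge, so every odd cycle lies inside one block $B_i$, immediately yields the additivity $OCT(G)=\sum_i OCT(B_i)$ via the two inequalities you give, using vertex-disjointness of the blocks. What your version buys is that it explicitly proves the optimality (lower-bound) direction --- that restricting a minimum OCT set of $G$ to each block gives OCT sets whose sizes sum to at most $OCT(G)$ --- whereas the paper's compatibility argument only establishes that the block-wise optima combine into a \emph{feasible} OCT set of $G$, leaving the converse inequality implicit. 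What the paper's version buys is constructiveness at the coloring level: the explicit recoloring step shows how to merge the bipartitions across bridges, which is what one needs if the actual two-coloring (not just the OCT number) is to be carried through the algorithmic pipeline built on the chain decomposition. Your one glossed point --- that the non-trivial components after bridge deletion are exactly the maximal 2-edge-connected subgraphs, including that each such component is itself bridgeless --- is standard and is in fact supplied by your own cycle-containment observation, so it is bookkeeping rather than a gap.
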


\begin{proof}
We induct on the number of 2-edge-connected maximal subgraphs. If there is no
such subgraph, then every edge is a bridge and the graph is a tree, therefore
the claim is true.

Suppose instead that there are $k$ such subgraphs in $G$ and the claim is true
for all graphs with $k-1$ such subgraphs. We can decompose the graph into
maximal 2-edge-connected subgraphs by computing a chain decomposition
\cite{schmidt2013simple} on $G$ to identify its \emph{bridges}. Removing these
bridges produces each maximal 2-edge-connected subgraph as a connected
component. Further, contracting these subgraphs creates a tree with the
contracted subgraphs as vertices and the bridges as the edges. Pick a subgraph
$S$ that is a leaf on this tree, and let $(v_1, v_2)$ be the bridge separating
$S$ from $G \setminus S$. By the induction hypothesis, we can compute $OCT(G
\setminus S)$ and $OCT(S)$ on their maximal 2-edge-connected subgraphs,
therefore all that remains is to show that these two partial solutions are
compatible.

Suppose that the partial solutions are expressed as a coloring: vertices in the
left partite set are colored $L$, the right partite set $R$, and neither
partite set (e.g. in the OCT set) as $N$. If at least one of $v_1, v_2$ is
colored $N$, or if one is colored $L$ and the other $R$, then these partial
solutions are compatible as-is. Suppose to the contrary that both are colored
with the same partite set color. Then in $S$ we recolor $L \to R$ and
$R \to L$. This recoloring does not change $OCT(S)$, and the partial solutions
are now compatible.
\smartqed \qed
\end{proof}

This preprocessing step is fast, costing only an additive $O(m)$ run time when
using Schmidt's chain decomposition algorithm \cite{schmidt2013simple}. This
approach also provides an opportunity for parallelization if the graph has many
2-edge-connected maximal subgraphs. While this technique applies to any graph,
we can take advantage of the 2-edge-connectivity in the class of
series-parallel graphs by exploiting \emph{nested ear decompositions}:

\begin{proposition}
\label{proposition:seriesparallel}
$OCT(G)$ can be computed in linear time for a series-parallel graph $G$.
\end{proposition}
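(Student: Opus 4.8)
The plan is to use the fact that series-parallel graphs are exactly the $K_4$-minor-free graphs and therefore have treewidth at most $2$, and then solve \problemfont{MinOCT} by a dynamic program over a width-$2$ tree decomposition. Such a decomposition, of size $O(n)$, can be built in linear time (for instance by repeatedly eliminating a vertex of degree at most $2$, or by Bodlaender's linear-time fixed-treewidth algorithm), and \problemfont{MinOCT} admits a standard dynamic program over a width-$w$ tree decomposition running in $2^{O(w)}\,(n+m)$ time; since $w \le 2$ and $m = O(n)$ for $K_4$-minor-free graphs, this is $O(n)$. Concretely, the same recursion can be phrased directly on the series/parallel structure of $G$: its blocks are $2$-connected and series-parallel, hence are single edges or two-terminal series-parallel (TTSP) graphs assembled from single edges by \emph{series} composition (identify a terminal of one piece with a terminal of the other, the shared vertex becoming interior) and \emph{parallel} composition (identify both pairs of terminals), and the block--cut tree together with a TTSP decomposition tree for each block is computable in linear time (Valdes, Tarjan, and Lawler; equivalently one may run the recursion off Eppstein's nested ear decomposition~\cite{eppstein1992parallel}). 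The bridge reduction of the preceding proposition is the special case in which a block is a single edge.

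At a separator the DP maintains, for each \emph{proper $\{L,R,N\}$-coloring} of the separator vertices --- each vertex assigned to the left side, the right side, or the OCT set, with no separator edge monochromatic in $\{L,R\}$ --- the minimum number of vertices put in the OCT set strictly below the separator. On TTSP pieces this reads: for a piece $H'$ with terminals $s,t$, let $f_{H'}(c_s,c_t)$, with $c_s,c_t \in \{L,R,N\}$, be the minimum number of \emph{non-terminal} vertices of $H'$ that must be placed in the OCT set so that the rest of $H'$ admits a proper $2$-coloring with $s$ on side $c_s$ and $t$ on side $c_t$, with the convention that a terminal colored $N$ is instead required to be in the OCT set and its side constraint is void, and $f_{H'} = \infty$ when no such configuration exists. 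For a single edge $st$: $f(L,R) = f(R,L) = 0$; $f(c_s,c_t) = 0$ whenever $c_s = N$ or $c_t = N$; $f(L,L) = f(R,R) = \infty$. For a parallel composition of $H_1$ and $H_2$,
\[ f(c_s,c_t) \;=\; f_{H_1}(c_s,c_t) + f_{H_2}(c_s,c_t). \]
For a series composition of $H_1$ and $H_2$ at the new interior vertex $v$,
\[ f(c_s,c_t) \;=\; \min_{c_v \in \{L,R,N\}} \bigl( f_{H_1}(c_s,c_v) + f_{H_2}(c_v,c_t) + [\,c_v = N\,] \bigr). \]
For a $2$-connected block one reads off $OCT = \min_{c_s,c_t}\bigl(f(c_s,c_t) + [\,c_s = N\,] + [\,c_t = N\,]\bigr)$; in general each block contributes a small table indexed by the colors of its cut vertices (closing off the remaining terminals with the same $[\,\cdot=N\,]$ charge), and these are combined over the block--cut tree by a merge at each cut vertex of the same min-over-colors form, charging a cut vertex placed in the OCT set exactly once.

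Correctness rests on one genuine point, the rest being bookkeeping. A set $X$ is an OCT set of a graph iff that graph has a proper $\{L,R,N\}$-coloring with $X$ as the $N$-class, so the DP optimizes over exactly the right objects. For a series composition (and for a cut-vertex merge) the shared vertex is a cut vertex, so every cycle lies in one of the two pieces, two colorings glue freely as soon as they agree there, and the $[\,\cdot=N\,]$ term charges that vertex once now that it has become interior. For a parallel composition one uses that in a bipartite graph all $s$--$t$ paths share a fixed parity, even iff $s$ and $t$ receive the same color; hence the two pieces glue into a bipartite graph realizing terminal colors $(c_s,c_t)$ iff each piece admits such a coloring for the \emph{same} $(c_s,c_t)$, since then concatenating an $s$--$t$ walk of $H_1$ with one of $H_2$ can never form an odd cycle. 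Optimality in each case is the standard exchange argument: an optimal $X$ for a composed graph restricts to OCT sets of the pieces whose induced $2$-colorings agree on the shared vertices, and conversely any two agreeing partial solutions reassemble into an OCT set of the whole.

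Running time: building the tree decomposition and the TTSP trees is $O(n+m) = O(n)$, and the DP then visits each of the $O(n)$ decomposition nodes once at $O(1)$ cost (at most nine table entries, each a minimum over at most three choices), with $O(1)$ work per cut vertex for the block merges, so $O(n)$ overall. I expect the only real care to be in the correctness verification rather than in the algorithm --- namely, pinning down that the consistency condition for parallel composition is exactly ``the two pieces agree on \emph{both} terminal colors'' (this is where the fixed-parity-of-$s$--$t$-paths fact is essential), and checking that the $N$-counting convention charges each OCT vertex once and only once across the series merges and the cut-vertex merges.
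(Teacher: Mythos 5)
Your proof is correct, but it takes a genuinely different route from the paper's. You reduce to the fact that series-parallel graphs have treewidth at most $2$ (equivalently, are $K_4$-minor-free) and run the standard $\{L,R,N\}$-coloring dynamic program, either over a width-$2$ tree decomposition or directly over the series/parallel composition tree, with the correct base case, sum rule for parallel composition, and min-over-$c_v$ rule (plus a $+1$ charge when the merged vertex goes to the OCT set) for series composition; the key observation that $X$ is an OCT set iff there is a proper $\{L,R,N\}$-coloring with $N$-class $X$ makes the gluing arguments routine, and in fact makes the parity-of-$s$--$t$-paths remark unnecessary (agreement of proper colorings on shared cut vertices already suffices). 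The paper instead works from Eppstein's nested ear decomposition of two-terminal series-parallel graphs, defines a parity for ears, and proves a combinatorial characterization: the minimum number of OCT vertices within a maximal nest interval equals the number of maximal runs of odd ears, which a single greedy pass computes; induction over the (disjoint) nest intervals finishes the argument. What each buys: your DP is more generic and robust --- it extends immediately to weighted variants and to any bounded-treewidth class, and its correctness is a standard exchange argument --- at the cost of relying on linear-time decomposition machinery (TTSP recognition or Bodlaender's algorithm) and of one bookkeeping wrinkle you should tighten: a block of the block--cut tree may have many cut vertices, so ``a table indexed by the colors of its cut vertices'' can blow up; the fix is to thread a per-vertex cost vector $g_v(c)$, $c \in \{L,R,N\}$, through the TTSP DP (or simply to run the width-$2$ tree-decomposition DP on all of $G$, which your first paragraph already covers). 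The paper's greedy, by contrast, yields an explicit closed-form description of the optimum in terms of odd-ear runs and produces the OCT set directly from the ear structure, dovetailing with its bridge/2-edge-connected preprocessing, but it is tailored to series-parallel graphs and does not generalize as readily.
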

\begin{proof}
The proof of Proposition \ref{proposition:seriesparallel} can be found in
Appendix \ref{appendix:eppstein}.
\smartqed \qed
\end{proof}

We conclude this subsection by defining an embedding subroutine that uses an
OCT-decomposition to embed into the biclique virtual hardware. At a high level,
\algorithmfont{OCT-Embed} first computes a minimum OCT set, embeds the OCT
vertices as if they were a complete graph, and then embeds the bipartite
induced subgraph directly into the biclique virtual hardware (Fig.
\ref{figure:octembed}.

\begin{subroutine}[\algorithmfont{OCT-Embed}]
Let $P$ be a problem graph with $V(P) = S \cup L \cup R$, where
$S = \{u_1, \dots, u_i\}$ is a minimum OCT set, and
$L = \{u_{i+1}, \dots, u_j\}$ and $R = \{u_{j+1}, \dots, u_n\}$ are a maximum
bipartite induced subgraph. Let $\virtualhardware$ be a biclique virtual hardware with partites $L(\virtualhardware) = \{v_1, \dots, v_{LM}\}$ and $R(\virtualhardware) = \{h_1, \dots, h_{LN}\}$. If $j \leq LM$ and $n-i \leq LN$, then \algorithmfont{OCT-Embed} produces an
embedding $\phi$ by mapping:
\begin{align*}
\phi(u_x) &=
\left\{
\begin{array}{ll}
\{v_x, h_x\} \text{ if } u_x \in S\\
\{v_x\} \text{ if } u_x \in L\\
\{h_{x-i}\} \text{ if } u_x \in R
\end{array}
\right.
\end{align*}
otherwise it outputs \texttt{FAIL}.
\end{subroutine}

\begin{figure}[!h]
\centering
\includegraphics[width=0.9\textwidth]{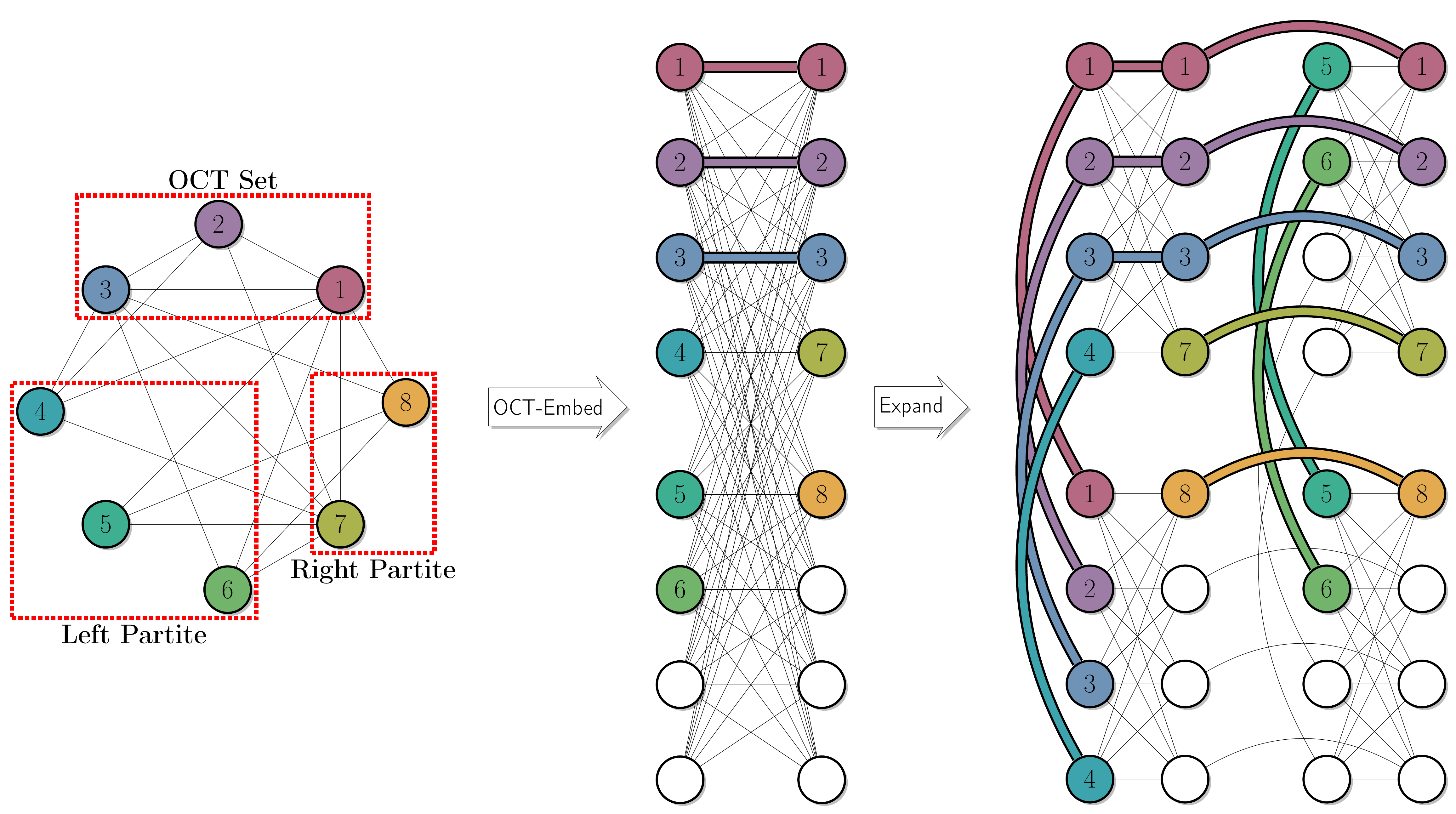}
\caption{Embedding an 8-vertex problem into \chimeraargs{4}{2}{2} using the
\algorithmfont{OCT-Embed} subroutine. For figure readability, vertex $u_i$ is
labeled with $i$.}
\label{figure:octembed}
\end{figure}

\subsection{Approximating OCT and \algorithmfont{Fast-OCT-Embed}}
\label{section:approximating_oct}
A downside to \algorithmfont{OCT-Embed} is its exponential run time,
restricting the subroutine's real-world applicability. However, an exact
solution to \problemfont{MinOCT} is not always required for a full embedding --
any odd cycle transversal decomposition will work as long as it fits into the
biclique virtual hardware. We utilize this fact to develop an approximation
algorithm for \problemfont{MaxBipartite}, and use this approximation algorithm
for two purposes: (1) as an initial solution for the iterative compression in
our algorithm for \algorithmfont{OCT-Embed}, and (2) as a standalone embedding
subroutine \algorithmfont{Fast-OCT-Embed}.

We approximate \problemfont{MaxBipartite} instead of \problemfont{MinOCT} for
two reasons. First, the Reed et al. algorithm \cite{reed2004finding} we use to
compute the exact OCT number uses a technique called \emph{iterative
compression}, where a solution of size $k+1$ is compressed to size $k$ over
several subgraph iterations. We can reduce the number of these iterations by
providing the algorithm with a large initial subgraph with at most $k$ OCT
vertices, therefore we have motivation for estimating a maximal bipartite
subgraph. Second, if we approximate \problemfont{MaxBipartite}, then our worst
approximations (in terms of magnitude) are when the graph has a large bipartite
graph. However, this implies a small OCT set, therefore the exact algorithm
will have an exponentially faster run time. Therefore approximating
\problemfont{MaxBipartite} makes more sense in this context.

Our approximation algorithm is outlined in Algorithm
\ref{algorithm:greedybipartite}. Partially motivated by the success of using a
greedy algorithm to compute exact solutions on series-parallel graphs, we found
that a minimum-degree--greedy algorithm also performed well in practice on
general graphs (c.f. Section \ref{section:comparing_oct}). In total, the
algorithm has a run time of $O(m)$ using a modification of Bataglj and
Zaver\u{s}nik's algorithm for computing $k$-core decompositions
\cite{batagelj2003algorithm}.

\begin{algorithm}[ht]
\caption{Greedy Maximal Bipartite Induced Subgraph}
\begin{algorithmic}[1]
\Function{GreedyBipartite}{$G$}
\State $L \gets \Callsf{GreedyIndSet}{G}$
\State $R \gets \Callsf{GreedyIndSet}{G \setminus L}$
\State \return{$L \cup R$}
\EndFunction
\State
\Function{\textsf{GreedyIndSet}}{$G$}
\State $S \gets \emptyset$
\While{$G$ not empty}
\State $v \gets \argmin_{u \in V(G)} d(u)$ \Comment{Pick any min degree vertex}
\State $S \gets S \cup \{v\}$
\State $G \gets G \setminus (\{v\} \cup \Call{N}v)$
\EndWhile
\State \return{$S$}
\EndFunction
\end{algorithmic}
\label{algorithm:greedybipartite}
\end{algorithm}

We begin the approximation factor analysis by noting that an approximation
algorithm for minimum independent set translates to \problemfont{MaxBipartite}:

\begin{lemma}
\algorithmfont{GreedyBipartite} implemented with an $\alpha$-approximation
\algorithmfont{GreedyIndSet} algorithm is an $\alpha$-approximation algorithm.
\label{lemma:indsetoct}
\end{lemma}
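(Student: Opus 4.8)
The plan is to relate the approximation ratio for \problemfont{MaxBipartite} directly to the approximation ratio for the underlying maximum independent set problem, observing that \algorithmfont{GreedyBipartite} builds its bipartite subgraph as the union of two greedily-chosen independent sets. First I would fix notation: let $G$ be the input graph, let $L^*, R^*$ be the partite sets of an optimal maximum bipartite induced subgraph of $G$, so $|L^*| + |R^*| = $ \problemfont{MaxBipartite}$(G)$, and without loss of generality assume $|L^*| \geq |R^*|$. Let $L$ and $R$ be the two sets produced by the two calls to \algorithmfont{GreedyIndSet} in Algorithm \ref{algorithm:greedybipartite}, so the algorithm returns $L \cup R$, which induces a bipartite subgraph by construction (each of $L$, $R$ is independent).

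The key step is a lower bound on $|L|$ and then on $|R|$. Since $L^*$ is an independent set in $G$ and \algorithmfont{GreedyIndSet} is an $\alpha$-approximation for maximum independent set, we get $|L| \geq \frac{1}{\alpha}\,\omega_{\mathrm{IS}}(G) \geq \frac{1}{\alpha}|L^*|$, where $\omega_{\mathrm{IS}}$ denotes the size of a maximum independent set. For the second call, note that $R^*$ is an independent set in $G \setminus L$ is \emph{not} immediate — $L$ need not avoid $R^*$. The fix is to observe that $L^* \cap R = \emptyset$ implies nothing directly either; instead I would argue that $R^* \setminus L$ is an independent set in $G \setminus L$ of size at least $|R^*| - |L \cap R^*| \geq |R^*| - |L|$... which is the wrong direction. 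The cleaner route is: whatever $L$ is, the graph $G \setminus L$ still contains some induced bipartite structure, but the simplest valid bound is to use that $R^*$ restricted to $V(G)\setminus L$, together with $L^* \setminus L$ if $L$ happened to eat into $L^*$, still yields an independent set in $G\setminus L$ of size at least $\max(|L^*|, |R^*|) - |L| $ is again wrong-signed. I expect the intended (and correct) argument is the following: after removing $L$, the set $(L^* \cup R^*) \setminus L$ still induces a bipartite graph in $G \setminus L$ with partite sets $L^* \setminus L$ and $R^* \setminus L$; one of these two has size at least $\frac{1}{2}\big(|L^*| + |R^*| - |L|\big)$, hence $G \setminus L$ contains an independent set of that size, so $|R| \geq \frac{1}{2\alpha}\big(|L^*| + |R^*| - |L|\big)$.

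Combining, $|L \cup R| = |L| + |R| \geq |L| + \frac{1}{2\alpha}\big(\mathrm{OPT} - |L|\big)$ where $\mathrm{OPT} = |L^*|+|R^*|$; since $\alpha \geq 1$ the coefficient $1 - \frac{1}{2\alpha}$ on $|L|$ is nonnegative, so this is minimized by taking $|L|$ as small as its guarantee allows, $|L| \geq \frac{1}{\alpha}|L^*| \geq \frac{1}{2\alpha}\mathrm{OPT}$. Substituting gives $|L\cup R| \geq \frac{1}{2\alpha}\mathrm{OPT} + \frac{1}{2\alpha}\big(\mathrm{OPT} - \frac{1}{2\alpha}\mathrm{OPT}\big)$, and a short calculation shows this is at least $\frac{1}{\alpha}\mathrm{OPT}$ (in fact strictly better for $\alpha > 1$), establishing the $\alpha$-approximation. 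The main obstacle I anticipate is the bookkeeping in the second call: making precise that removing the first greedy independent set $L$ cannot destroy more than $|L|$ vertices of the optimal bipartite subgraph, and that what remains still splits into two independent sets so that the $\alpha$-approximation on $G \setminus L$ recovers a large enough $R$. Everything else is elementary arithmetic with the two greedy guarantees.
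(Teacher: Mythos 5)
Your intermediate bounds are sound: $|L|\ge|L^*|/\alpha$ because $L^*$ is an independent set of $G$, and $|R|\ge(\mathrm{OPT}-|L|)/(2\alpha)$ because $(L^*\cup R^*)\setminus L$ still induces a bipartite subgraph of $G\setminus L$, so one of its sides is an independent set of size at least $(\mathrm{OPT}-|L|)/2$. The genuine gap is the concluding ``short calculation,'' which is false. Writing $f(|L|)=|L|\bigl(1-\tfrac{1}{2\alpha}\bigr)+\tfrac{\mathrm{OPT}}{2\alpha}$ and plugging in the only lower bound you have, $|L|\ge|L^*|/\alpha$, gives $f\ge\tfrac{|L^*|}{\alpha}\bigl(1-\tfrac{1}{2\alpha}\bigr)+\tfrac{\mathrm{OPT}}{2\alpha}$, and subtracting $\tfrac{\mathrm{OPT}}{\alpha}$ leaves $\tfrac{1}{2\alpha}\bigl(|L^*|(1-\tfrac{1}{\alpha})-|R^*|\bigr)$, which is negative whenever the optimum is close to balanced. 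In the balanced case $|L^*|=|R^*|$ your inequalities certify only $|L\cup R|\ge\tfrac{\mathrm{OPT}}{\alpha}\bigl(1-\tfrac{1}{4\alpha}\bigr)$, i.e.\ a $\tfrac{4\alpha^2}{4\alpha-1}$-approximation; at $\alpha=1$ this is only $\tfrac{3}{4}\mathrm{OPT}$, whereas the lemma claims an exact solution there. So no algebraic massaging of these two bounds can reach the stated factor $\alpha$ -- the loss is built into the step where you bound the second call by \emph{half} of the surviving optimum.

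The paper's argument avoids that halving entirely: it never compares the greedy output to $\mathrm{OPT}$ directly, but fixes only $S$, the \emph{larger} partite set of an optimal bipartite subgraph (so $|S|\ge\mathrm{OPT}/2$), and charges each vertex the greedy adds against the at most $\alpha$ vertices of $S$ that choice can exclude. The point is to conclude that \emph{both} calls -- in particular the smaller of $|L|,|R|$ -- return at least $|S|/\alpha$ vertices, whence $|L|+|R|\ge 2|S|/\alpha\ge\mathrm{OPT}/\alpha$. To repair your proof you need a bound of that shape, $\min(|L|,|R|)\ge|S|/\alpha$, obtained by comparing both greedy runs to the single independent set $S$ (vertices of $S$ absorbed into $L$ already count toward the output, and the vertices of $S$ surviving in $G\setminus L$ remain independent there), rather than the bound $|R|\ge(\mathrm{OPT}-|L|)/(2\alpha)$, which is too weak to yield the lemma.
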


\begin{proof}
Let $S$ be a fixed set of vertices such that $G[S]$ is the larger partite of a
maximum bipartite induced subgraph.
We want to show that for every vertex \algorithmfont{GreedyBipartite} adds to
its solution $S'$, at most $\alpha$ vertices from $S$ are not chosen.
Let $L$ and $R$ be the first and second independent sets constructed by
\algorithmfont{GreedyIndSet}, respectively.
First, the set $L$ is chosen without (immediately) disqualifying any vertex in
$G \setminus L$ from being in $R$, so no vertices are disqualified from $S'$ in this step.
When constructing $R$, at most $\alpha$ vertices from $S$ are disqualified for
every vertex added to $R$, by definition of the approximation factor.
Therefore $R$ itself is an $\alpha$-approximation for the partite and a
$2\alpha$-approximation for \problemfont{MaxBipartite}.
If $|L| \geq |R|$ then we have shown at least an $\alpha$-approximation.
To show the approximation factor still holds when $|L| < |R|$, we want to show
that $L$ is a $\alpha$-approximation for \problemfont{MaxIndSet} in
$G \setminus R$. But the previous argument still holds, since at most $\alpha$
vertices from $S$ are disqualified from $S'$ for every vertex chosen from
$G \setminus R$. Therefore in both cases we have a $alpha$-approximation for
the larger partite of a maximum induced bipartite subgraph, therefore we have an
$\alpha$-approximation for \problemfont{MaxBipartite}.
\smartqed \qed
\end{proof}

\begin{corollary}
\algorithmfont{GreedyBipartite} is a $\frac{\Delta+2}{3}$-approximation and a
$\frac{2\bar{d}+3}{5}$-approximation for graphs with maximum degree $\Delta$
and average degree $\bar{d}$.
\end{corollary}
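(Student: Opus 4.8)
The plan is to reduce everything to the classical analysis of the minimum-degree greedy heuristic for independent sets. By Lemma~\ref{lemma:indsetoct}, it suffices to prove that \algorithmfont{GreedyIndSet} is itself a $\frac{\Delta+2}{3}$-approximation on graphs of maximum degree $\Delta$ and a $\frac{2\bar d+3}{5}$-approximation on graphs of average degree $\bar d$ for maximum independent set: applying Lemma~\ref{lemma:indsetoct} with $\alpha=\frac{\Delta+2}{3}$ and $\alpha=\frac{2\bar d+3}{5}$ then yields the two ratios for \algorithmfont{GreedyBipartite}, using for the first case that deleting vertices never increases the maximum degree. Both independent-set bounds are the known guarantees of this heuristic (due to Halld\'orsson and Radhakrishnan), so the content of the proof is to re-derive them.

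For the maximum-degree bound I would induct on $|V(G)|$ to show $\beta(G)\le\frac{\Delta+2}{3}\,g(G)$, where $\beta$ denotes the maximum independent set size and $g$ the size of the greedy solution. The first greedy step picks a vertex $v$ of current minimum degree $d$ and deletes $\{v\}\cup N(v)$, leaving $G'$ with $g(G)=1+g(G')$; fixing an optimal independent set $O$ of $G$ gives $\beta(G)\le|O\cap(\{v\}\cup N(v))|+\beta(G')$. When the intersection term is small the induction closes at once; the delicate case is when many neighbors of $v$ belong to $O$. There one uses that minimality of $d$ forces each of those $d$ neighbors to have had degree at least $d$, so the step destroys on the order of $d^2$ edges and the structure of $G'$ is correspondingly constrained. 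Packaging this ``edges destroyed versus one vertex gained'' trade-off into the induction --- most cleanly via a potential that also tracks the number of surviving edges --- is the crux, and it is exactly what sharpens the trivial bound $\Delta+1$ (or the easy $\frac{\Delta+1}{2}$) to $\frac{\Delta+2}{3}$.

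The average-degree bound $\frac{2\bar d+3}{5}$ is the harder half, since $\bar d$, unlike $\Delta$, is not monotone under vertex deletion and so a direct induction in $\bar d$ fails. The plan is to introduce a weighting $\Phi(G)$ that is an affine combination of the number of surviving vertices and the number of surviving edges, with coefficients tuned so that (i) each greedy step decreases $\Phi$ by a bounded amount and (ii) $\Phi(G)$ initially dominates $5\,\beta(G)$ up to an additive constant, after which telescoping over the whole run gives the ratio. Calibrating those coefficients so that a step removing few vertices but many edges and a step removing many low-degree vertices are both absorbed is the main obstacle; the constants $2,3,5$ fall out precisely from balancing these two extremes, via a short case analysis on $d$ relative to $\bar d$. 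With \algorithmfont{GreedyIndSet} established to meet both bounds, the corollary is immediate from Lemma~\ref{lemma:indsetoct}.
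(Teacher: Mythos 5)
Your argument is essentially the paper's proof: the paper simply cites Halld\'orsson and Radhakrishnan for the $\frac{\Delta+2}{3}$ and $\frac{2\bar{d}+3}{5}$ guarantees of \algorithmfont{GreedyIndSet} and then applies Lemma~\ref{lemma:indsetoct}, exactly as in your first paragraph. The sketched re-derivations of those two independent-set bounds are unnecessary (and as written are only outlines --- in particular the calibration of the potential function for the $\frac{2\bar{d}+3}{5}$ bound is left undone), but since you correctly attribute them as known results, the proof stands on the same footing as the paper's.
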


\begin{proof}
Halld\'{o}rsson and Radhakrishnan \cite{halldorsson1994greed} show that
\algorithmfont{GreedyIndSet} is a $\frac{\Delta+2}{3}$-approximation and a
$\frac{2\bar{d}+3}{5}$-approximation for maximum independent set. By Lemma
\ref{lemma:indsetoct}, the same approximation factors hold for
\algorithmfont{GreedyBipartite}.
\smartqed \qed
\end{proof}

\begin{corollary}
\algorithmfont{GreedyBipartite} is a $d$-approximation for $d$-degenerate
graphs.
\end{corollary}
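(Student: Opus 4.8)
The plan is to reduce the statement to a property of \algorithmfont{GreedyIndSet} in isolation. By Lemma~\ref{lemma:indsetoct}, it is enough to show that the minimum-degree greedy independent set routine is a $d$-approximation for \problemfont{MaxIndSet} on $d$-degenerate graphs; the two calls to \algorithmfont{GreedyIndSet} inside \algorithmfont{GreedyBipartite} both receive $d$-degenerate inputs, since an induced subgraph of a $d$-degenerate graph (here $G \setminus L$) is again $d$-degenerate.

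For the core claim I would first isolate the one structural fact used: at the start of each iteration, the graph currently held by \algorithmfont{GreedyIndSet} is an induced subgraph of the original input, hence $d$-degenerate, hence has a vertex of degree at most $d$. Consequently the vertex $v$ selected at that iteration (a minimum-degree vertex) has $d(v) \le d$, and the update that removes $v$ together with its current neighborhood deletes at most $d+1$ vertices.

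The heart of the argument is a charging scheme. Fix a maximum independent set $S^{\ast}$ of the input, and let $S'$ be the output of \algorithmfont{GreedyIndSet}. Every vertex of $S^{\ast}$ is eventually deleted at a unique iteration; charge it there. Now bound the charge to a single iteration that selects $v$: the vertices deleted are $v$ and its current neighbors. If $v \in S^{\ast}$, independence of $S^{\ast}$ forbids any neighbor of $v$ from lying in $S^{\ast}$, so only $v$ itself is charged. If $v \notin S^{\ast}$, at most $d(v) \le d$ of the deleted neighbors can be in $S^{\ast}$. Either way, using $d \ge 1$, at most $d$ vertices of $S^{\ast}$ are charged per iteration, and there are exactly $|S'|$ iterations, giving $|S^{\ast}| \le d\,|S'|$, as required.

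Since the argument is short, I do not expect a real obstacle; the only point demanding care is squeezing the approximation factor down to exactly $d$ rather than the $d+1$ a naive count of deleted vertices would give. The saving comes entirely from the case split above: when the greedily chosen vertex is itself optimal, it ``wastes'' none of its removed neighbors on $S^{\ast}$. I would also flag explicitly the standing convention $d \ge 1$ (a $0$-degenerate, i.e. edgeless, graph makes the statement vacuous and ``$d$-approximation'' meaningless), and note that this bound is incomparable in general to the degree-based corollary, being far stronger on sparse classes such as planar or bounded-treewidth graphs but weaker on dense instances where $d$ is close to $\Delta$.
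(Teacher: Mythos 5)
Your proposal is correct and takes essentially the same approach as the paper: it reduces the claim to showing \algorithmfont{GreedyIndSet} is a $d$-approximation for maximum independent set via Lemma~\ref{lemma:indsetoct}, and proves that by charging vertices of a fixed optimum to greedy iterations, using that induced subgraphs of $d$-degenerate graphs have minimum degree at most $d$. Your explicit case split (selected vertex in versus not in the optimum, yielding $d$ rather than the naive $d+1$) and the remark that $G \setminus L$ stays $d$-degenerate are details the paper's terser proof leaves implicit, but the underlying argument is the same.
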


\begin{proof}
We first want to show that \algorithmfont{GreedyIndSet} is a $d$-approximation,
this proof mirrors that of Lemma \ref{lemma:indsetoct}. Fix a maximum
independent set $S$. In each step of \algorithmfont{GreedyIndSet}, a vertex
added to the solution disqualifies at most $d$ vertices from $S$. Therefore
\algorithmfont{GreedyIndSet} is a $d$-approximation for a maximum independent
set, and applying Lemma \ref{lemma:indsetoct} shows that
\algorithmfont{GreedyBipartite} is a $d$-approximation for
\problemfont{MaxBipartite}.
\smartqed \qed
\end{proof}

Up to this point we have not assumed anything about the OCT set when computing
an approximation factor. However, as graphs get more dense the OCT set must also grow. We can show this by using \emph{degeneracy} as a metric for density:

\begin{definition}[Graph Degeneracy]
The \emph{degeneracy} of a graph $G$ is the smallest $k$ such that every subgraph of $G$ has a vertex of degree at most $k$.
\end{definition}

\begin{lemma}
\algorithmfont{GreedyBipartite} is a $(n-d)$-approximation for a $d$-degenerate
graph when $d \geq \frac{n}{2}$.
\label{lemma:GreedyBipartite}
\end{lemma}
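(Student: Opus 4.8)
The plan is to use the density hypothesis to bound the optimum from above and the \algorithmfont{GreedyBipartite} output from below, then divide. Write the output of \algorithmfont{GreedyBipartite} as $L \cup R$, where $L = \algorithmfont{GreedyIndSet}(G)$ and $R = \algorithmfont{GreedyIndSet}(G \setminus L)$; since $L$ and $R$ are disjoint the output has $|L| + |R|$ vertices. I will show that a maximum induced bipartite subgraph of $G$ has at most $2(n-d)$ vertices while $|L| + |R| \ge 2$, so the ratio of the optimum to the \algorithmfont{GreedyBipartite} output is at most $2(n-d)/2 = n-d$.

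The one substantive step is the bound $\alpha(G) \le n - d$ for a graph of degeneracy $d$, which I would prove by exhibiting a good elimination order. Fix a maximum independent set $I$, let $C = V(G) \setminus I$ (so $|C| = n - \alpha(G)$), and eliminate the vertices of $I$ first, in any order, and then those of $C$. Every vertex of $I$ has all of its neighbors in $C$, so when it is removed its remaining degree is at most $|C| = n - \alpha(G)$; every vertex of $C$ is removed after all of $I$ is gone, so its remaining degree is at most $|C| - 1$. Hence this order certifies degeneracy at most $n - \alpha(G)$, so $d \le n - \alpha(G)$, i.e.\ $\alpha(G) \le n - d$. Now the same observation behind the factor of $2$ in the proof of Lemma~\ref{lemma:indsetoct} --- the larger of the two sides of an induced bipartite subgraph is itself an independent set of $G$ --- shows that every induced bipartite subgraph has at most $2\alpha(G) \le 2(n-d)$ vertices, equivalently $OCT(G) \ge 2d-n$. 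This is exactly the ``denser graphs need larger odd cycle transversals'' phenomenon motivating the lemma, and the quantity $2d-n$ is nonnegative precisely because $d \ge n/2$.

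For the lower bound on the output: $L$ is nonempty because $G$ has a vertex, and $L \ne V(G)$ because $G$ has an edge (its degeneracy is $d \ge n/2 \ge 1$), so $G \setminus L$ is nonempty and $R$ is nonempty too; hence $|L| + |R| \ge 2$. Dividing the two bounds finishes the proof. One could instead try to route through Lemma~\ref{lemma:indsetoct} --- $\alpha(G) \le n-d$ makes \algorithmfont{GreedyIndSet} trivially an $(n-d)$-approximation on $G$ --- but that transfer lemma also calls on the guarantee of \algorithmfont{GreedyIndSet} on $G \setminus L$, where $\alpha(\cdot) \le n-d$ is not the relevant estimate, so the self-contained ``output $\ge 2$'' argument is the cleaner route.

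I expect the independence-number inequality $\alpha(G) \le n-d$ to be the only place any real work happens; it is where the degeneracy hypothesis is spent. The role of $d \ge n/2$ is to make the resulting estimate worthwhile: only then is $2(n-d) \le n$, so the bound on the bipartite subgraph beats the trivial one and the $(n-d)$-approximation improves on the generic $d$-approximation of the preceding corollary. The remaining steps --- the factor-$2$ passage from bipartite size to $\alpha(G)$ and the two-vertex floor on the greedy output --- are routine.
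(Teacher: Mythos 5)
Your proposal is correct, and it actually shores up the step where the paper's own proof is weakest. Both arguments rest on the same pivotal bound --- a graph of degeneracy $d$ has no induced bipartite subgraph on more than $2(n-d)$ vertices --- but the paper only asserts this informally (``two vertices per additional unit of degeneracy'') and then extracts the factor $n-d$ by combining it with a claimed greedy output of roughly $2n/d$ and algebra that is hard to follow; you instead derive the bound rigorously from the independence-number inequality $\alpha(G)\le n-d$, certified by the elimination order that deletes a maximum independent set before its complement, and then normalize with the trivial observation that \algorithmfont{GreedyBipartite} returns at least two vertices whenever $G$ has an edge (guaranteed here since $d\ge n/2$ forces $d\ge 1$ when $n\ge 2$). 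Your route is the more elementary and airtight one: the independence-number inequality is exactly the missing lemma behind the paper's counting claim, and lower-bounding the algorithm's output by $2$ avoids any appeal to the greedy guarantee on $d$-degenerate graphs, which is where the paper's algebra appears to drop a factor of $2$. One caveat you share with the paper: both proofs read ``$d$-degenerate'' as degeneracy exactly $d$ (your step $d\le n-\alpha(G)$ needs this), since if the degeneracy is merely at most $d$ the bound $2(n-d)$ on the optimum fails (e.g.\ an edgeless graph); it would be worth stating that reading explicitly.
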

\begin{proof}

When $d \leq \frac{n}{2}$, the desired graph can always be found as a subset of
$K_{n/2,n/2}$. However, for larger values of $d$, vertices must be moved from
the bipartite graph into the OCT set, specifically two vertices per additional
unit of degeneracy. This fact means that a $d$-degenerate graph can have at
most a bipartite subgraph on $2(n-d)$ vertices. Solving for the approximation
factor: $\alpha \cdot 2(n-d) = \frac{2n}{d}$, so
$\alpha = \frac{2n}{2d(n-d)} = \frac{n}{n-d} \cdot \frac{1}{d} \geq \frac{n}{n-d} \cdot \frac{1}{n} = \frac{1}{n-d}$.
\smartqed \qed
\end{proof}

\begin{proposition}
\label{proposition:approximation}
\algorithmfont{Fast-OCT-Embed} is a $\min(d, n-d)$-approximation for
$d$-degenerate graphs.
\end{proposition}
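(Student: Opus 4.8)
The plan is to combine the two approximation-factor results already established for \algorithmfont{GreedyBipartite}: the degeneracy bound from the third corollary (that \algorithmfont{GreedyBipartite} is a $d$-approximation for $d$-degenerate graphs) and the density bound from Lemma \ref{lemma:GreedyBipartite} (that it is an $(n-d)$-approximation once $d \geq n/2$). Since \algorithmfont{Fast-OCT-Embed} is defined to run \algorithmfont{GreedyBipartite} to obtain the bipartite induced subgraph and then embed the remaining OCT vertices as a clique into the biclique virtual hardware, its approximation quality for \problemfont{MaxBipartite} is exactly that of \algorithmfont{GreedyBipartite}. So the statement reduces to observing that, for \emph{every} value of $d$, at least one of the two corollary/lemma bounds applies: when $d < n/2$ the $d$-approximation is in force and $d = \min(d, n-d)$; when $d \geq n/2$ Lemma \ref{lemma:GreedyBipartite} gives an $(n-d)$-approximation and $n - d = \min(d, n-d)$. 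Taking the better of the two bounds in each regime yields a uniform $\min(d, n-d)$-approximation.

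First I would state explicitly that \algorithmfont{Fast-OCT-Embed} inherits the approximation factor of \algorithmfont{GreedyBipartite}, citing the definition of the subroutine and the fact that the OCT/clique part of the embedding does not change how many bipartite vertices are captured. Second, I would split on the relationship between $d$ and $n/2$. In the case $d \leq n/2$, invoke the $d$-degenerate corollary directly; here $\min(d,n-d) = d$, so a $d$-approximation is a $\min(d,n-d)$-approximation. Third, in the case $d > n/2$, invoke Lemma \ref{lemma:GreedyBipartite} to get an $(n-d)$-approximation; here $\min(d,n-d) = n-d$, so we are done. Fourth, I would note the boundary case $d = n/2$ is covered by either branch and is consistent. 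Combining the cases gives the claimed $\min(d,n-d)$-approximation for all $d$-degenerate graphs.

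I expect the main subtlety — rather than a true obstacle — to be making precise the claim that \algorithmfont{Fast-OCT-Embed}'s performance is \emph{exactly} governed by \algorithmfont{GreedyBipartite}'s \problemfont{MaxBipartite} approximation, since \algorithmfont{Fast-OCT-Embed} also has a feasibility condition (the bipartite parts must fit the $LM$ and $LN$ bounds of the biclique virtual hardware, and the OCT set is embedded as a clique). One should be careful to phrase the proposition as an approximation guarantee \emph{for the size of the bipartite induced subgraph recovered} (equivalently, for \problemfont{MaxBipartite}), matching how the earlier corollaries are phrased, so that the clique embedding of the OCT remainder is a side effect and not part of the quantity being approximated. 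With that reading fixed, the proof is just the two-case split above and should be short.
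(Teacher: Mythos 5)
Your proposal is correct and takes essentially the same route as the paper, whose proof simply states that the result follows from Lemma \ref{lemma:indsetoct} (which yields the $d$-approximation on $d$-degenerate graphs) together with the $(n-d)$-approximation of Lemma \ref{lemma:GreedyBipartite}, i.e., taking the better of the two bounds. Your explicit case split at $d = n/2$ and your remark that \algorithmfont{Fast-OCT-Embed} inherits \algorithmfont{GreedyBipartite}'s guarantee for \problemfont{MaxBipartite} merely spell out details the paper leaves implicit.
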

\begin{proof}
This result follows directly from Lemmas \ref{lemma:indsetoct} and
\ref{lemma:GreedyBipartite}.
\smartqed \qed
\end{proof}

In other words, the degeneracy-based approximation factor is best on very
sparse and very dense graphs. Swapping the approximation algorithm into our
embedding subroutine, we now define \algorithmfont{Fast-OCT-Embed}:

\begin{subroutine}[\algorithmfont{Fast-OCT-Embed}]
Let $P$ be a problem graph with $V(P) = S \cup L \cup R$, where
$S = \{u_1, \dots, u_i\}$ is an OCT set, and
$L = \{u_{i+1}, \dots, u_j\}$ and $R = \{u_{j+1}, \dots, u_n\}$ are a maximum
bipartite induced subgraph. Let $\virtualhardware$ be a biclique virtual hardware with partites $L(\virtualhardware) = \{v_1, \dots, v_{LM}\}$ and $R(\virtualhardware) = \{h_1, \dots, h_{LN}\}$. If $j \leq LM$ and $n-i \leq LN$, then \algorithmfont{OCT-Embed} produces an
embedding $\phi$ by mapping:
\begin{align*}
\phi(u_x) &=
\left\{
\begin{array}{ll}
\{v_x, h_x\} \text{ if } u_x \in S\\
\{v_x\} \text{ if } u_x \in L\\
\{h_{x-i}\} \text{ if } u_x \in R
\end{array}
\right.
\end{align*}
otherwise it outputs \texttt{FAIL}.
\end{subroutine}

\subsection{Summary}

In summary, the odd cycle transversal provides a structured method for
decomposing problems and embedding them smartly into the Chimera hardware. We
showed that OCT is a more flexible property than treewidth in Chimera,
increasing flexibility to new generations of hardware, and also showed how to
use OCT to embed into a biclique virtual hardware. In the next section we
evaluate these new embedding subroutines against previously studied embedding
algorithms.

\section{Experimental Results}
\label{section:experimental_results}
In this section we experimentally evaluate virtual hardware against the
existing benchmark algorithms. First, we compare the approximation
\algorithmfont{Fast-OCT-Embed} against the exact \algorithmfont{OCT-Embed},
using no reduction routines. We then compare
the \algorithmfont{Reduced Fast-OCT-Embed} against Cai et al.'s Dijkstra-based
heuristic (denoted here as \algorithmfont{CMR (Dijkstra)}). Finally we conclude
with a comparison again Choi's \algorithmfont{TRIAD} algorithm for embedding
complete graphs. Against both benchmarks we find that \algorithmfont{Reduced
Fast-OCT-Embed} finds embeddings for larger graphs, using less qubits, with
fast run times (less than a second).

To minimize bias in the cross-algorithm comparisons, all algorithms and
subroutines (e.g. breadth-first search, Dijkstra's algorithm, etc.) were
implemented manually in C++ and are available at
\url{https://github.com/TheoryInPractice/aqc-virtual-embedding}.

\algorithmfont{OCT-Embed} is implemented using Lokshtanov et al.'s
simplification of Reed et al.'s iterative compression algorithm
\cite{lokshtanov2009simpler, reed2004finding}. \algorithmfont{Fast-OCT-Embed}
is computed using the smallest OCT number found with $10,000$ runs of
\algorithmfont{GreedyBipartite}; run times reported include the total run time
to collect this distribution. \algorithmfont{Reduced Fast-OCT-Embed}
additionally applies \algorithmfont{Qubit-Reduce} and
\algorithmfont{2Ex-Reduce} using
\algorithmfont{Fast-Qubit-Scoring}.

We implemented the \algorithmfont{CMR (Dijkstra)} algorithm from the
Dijkstra-based pseudocode provided on page 7 of \cite{cai2014practical}. Since
this heuristic does not necessarily produce an embedding if it exists, we run
the heuristic repeatedly until an embedding is found or the time cutoff is
reached; this provides the expected time to find an embedding.
\algorithmfont{TRIAD} is implemented with Choi's deterministic algorithm, and
\algorithmfont{Reduced TRIAD} uses the biclique virtual hardware with
\algorithmfont{Qubit-Reduce} and \algorithmfont{2Ex-Reduce} using
\algorithmfont{Fast-Qubit-Scoring}.

To provide a broad spectrum of comparisons, we generated problem graphs using
four random graph generators at three density levels (Table
\ref{table:densities}). While previous algorithms such as
\algorithmfont{CMR} have been tested on problem graphs with constant vertex
degree (e.g. grid and 3-regular graphs), this assumption is unrealistic for
real-world QUBOs. Intuitively, the complexity of the problem should scale with
the number of variables included. As an example, we note that Beasley's QUBOs
\cite{beasley1990or} have average vertex degree of approximately $\frac{n}{20}$
for $n$ vertex problems.

We define the random graph models as follows. Noisy bipartite graphs were
generated by splitting the vertices evenly (up to parity) into two partite
sets, including a bipartite
edge at probability $p$, and including a non-bipartite edge at probability
$\frac{p}{5}$. The GNP graphs (also known as Erd\H{o}s-R\'{e}nyi
\cite{erdos1960evolution}) are generated by flipping a coin for each possible
edge and including it with probability $p$. The regular graph generator samples
from the space of graphs where each vertex has degree exactly $k$.
Barab\'{a}si-Albert graphs \cite{albert2002statistical} are generated by
iteratively attaching $n - k$ vertices to a subgraph of $k$ vertices using
preferential attachment; we generate the initial subgraph using GNP with
$p=0.25$. All graphs are generated using the NetworkX implementations
\cite{hagberg2008exploring}, excepting Barab\'{a}si-Albert, which required a
modification to generate the initial subgraph as specified above.

\begin{table}[h!]
\centering
\begin{tabular}{rlll}
\toprule
\textbf{Graph Family} & \textbf{Low Density} & \textbf{Medium Density} & \textbf{High Density}\\
\midrule
Noisy Bipartite & $p = 0.25$ & $p = 0.50$ & $p = 0.75$\\
GNP & $p = 0.25$ & $p = 0.50$ & $p = 0.75$\\
Regular & $k = 0.25n$ & $k = 0.50n$ & $k = 0.75n$\\
Barabasi-Albert & $k = 0.25n$ & $k = 0.50n$ & $k = 0.75n$\\
\bottomrule
\end{tabular}
\caption{Definition of density levels for the random input graph generators.}
\label{table:densities}
\end{table}

All experiments were run on a workstation running Fedora 24, and were each
allocated a core on an Intel X5675 processor and 1GB of RAM.
Run times were limited to 60 minutes using the \texttt{timeout -k 10s 60m}
command, and no algorithm used more than its allocated memory.
The C++ code was compiled with g++ 5.3.1 at the \texttt{-O2} optimization
level, and controlled with wrapper scripts run with Python 2.7.11.
All experiments were seeded using the number of seeds specified in each
experiment below.
The data points plotted are the median over all problem graph instances and
seeded algorithm runs.

\subsection{Experimental Results}
\label{section:comparing_oct}

\begin{figure}[ht]
\centering
\includegraphics[width=\textwidth]{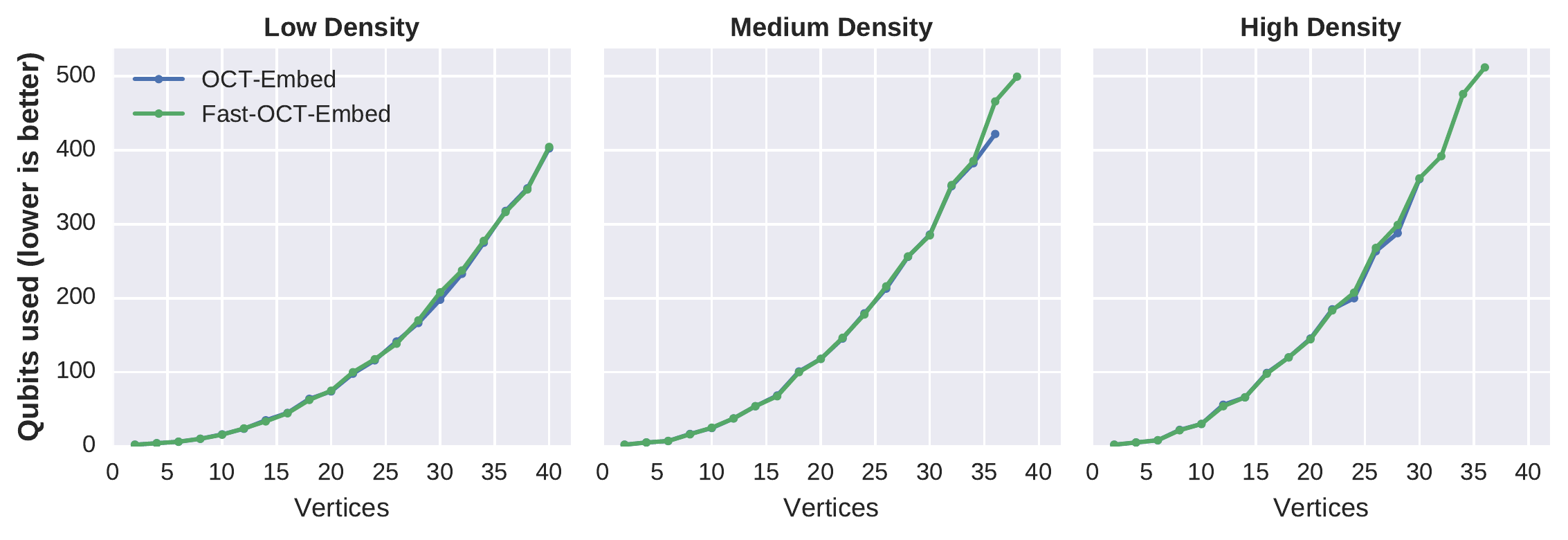}
\includegraphics[width=\textwidth]{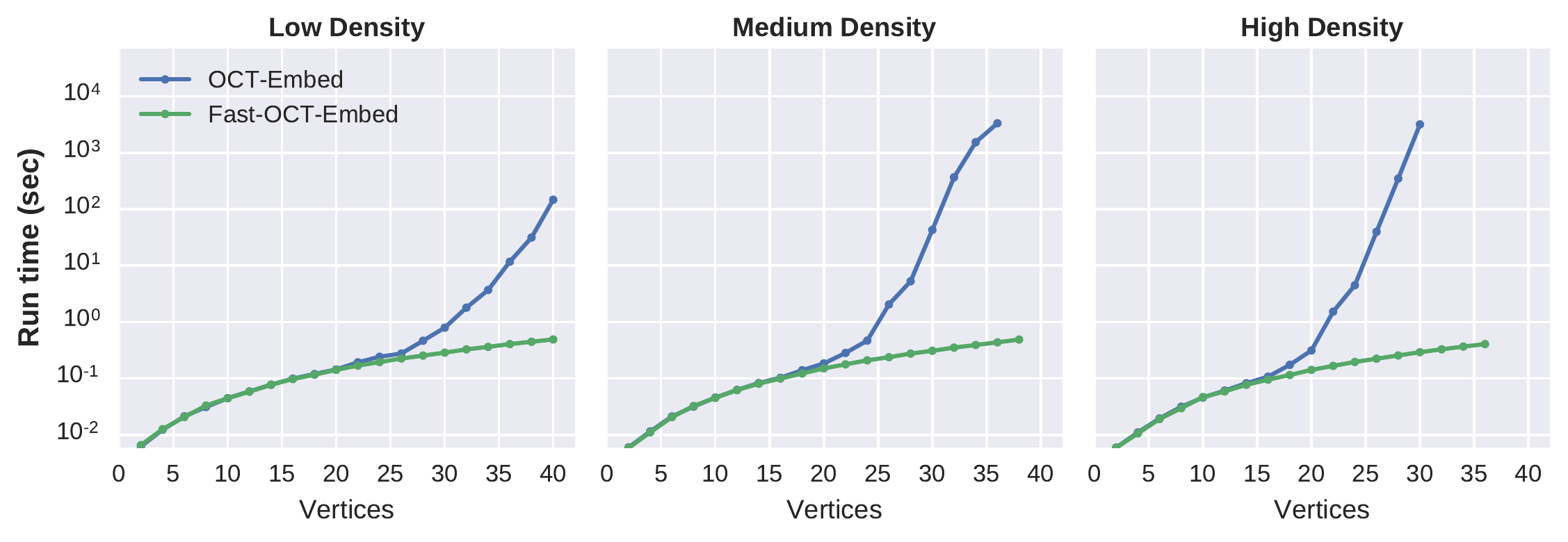}
\caption{Embedding GNP graphs into Chimera(4, 8, 8); data points are the median over 25 random graphs and 10 random algorithm seeds. Experimentally, we observe that the approximation algorithm performs notably better than its approximation factor guarantees, while additionally achieving highly practical run times.}
\label{figure:qubits_used_hybrid}
\end{figure}

Comparing \algorithmfont{OCT-Embed} and \algorithmfont{Fast-OCT-Embed} on 25
graph instances per $n$ value and 10 seeded algorithm runs, we find that
\algorithmfont{Fast-OCT-Embed} practically matches the solution quality of the
exact algorithm, while running in under a second. We report a representative
sample in Fig. \ref{figure:qubits_used_hybrid}.

\begin{figure}[h]
\centering
\includegraphics[width=\textwidth]{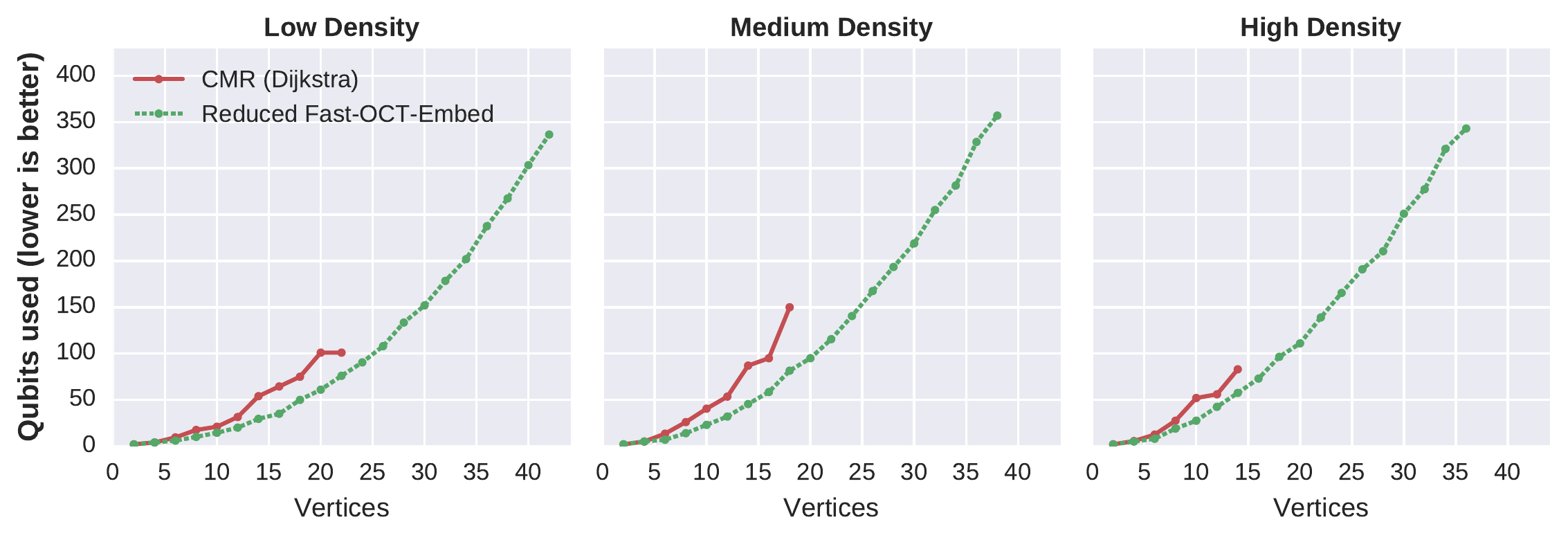}
\includegraphics[width=\textwidth]{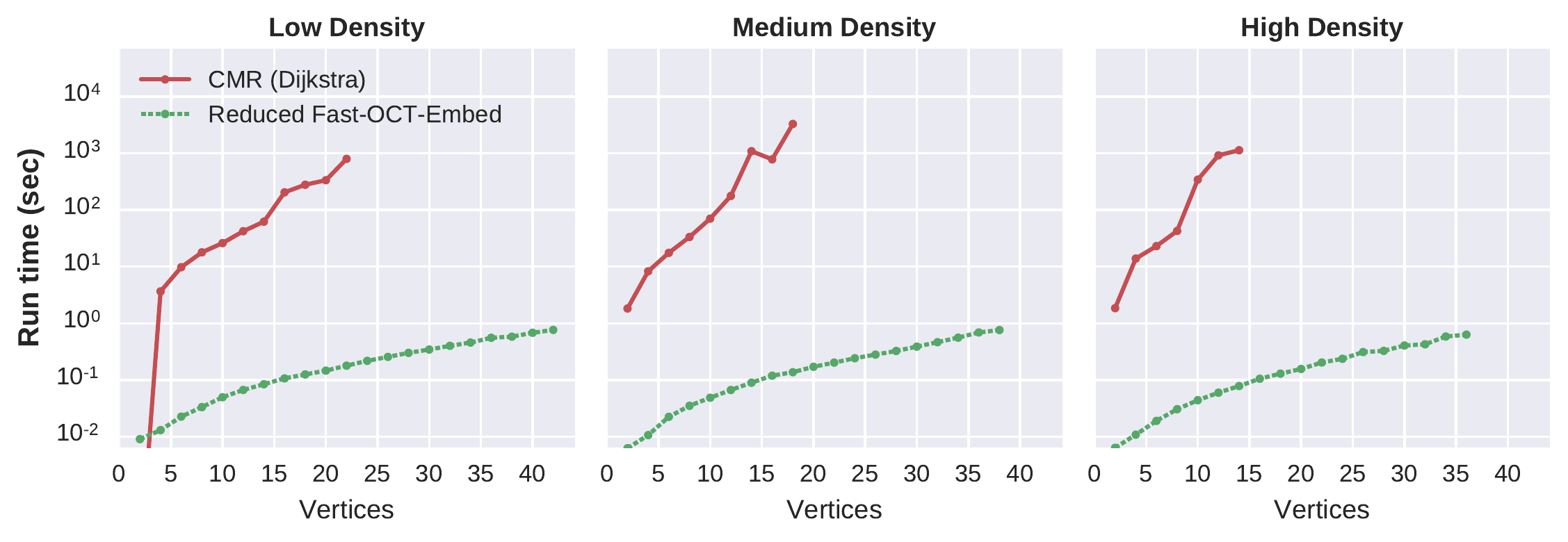}
\caption{Embedding GNP graphs into Chimera(4, 8, 8); data points are the median over 10 random graphs and 10 random algorithm seeds. \algorithmfont{Reduced Fast-OCT-Embed} consistently outperforms \algorithmfont{CMR} in both qubits used and run time.}
\label{figure:qubits_used_cmr}
\end{figure}

To maintain a reasonable run time while maintaining 10 graph instances per $n$, we reduced the comparison with \algorithmfont{CMR} to 10 seeded algorithm runs; this reduction did not impact the results since both \algorithmfont{CMR (Dijkstra)} and \algorithmfont{Reduced Fast-OCT-Embed}
restart automatically as needed. We found that \algorithmfont{CMR (Dijkstra)} could not find smaller embeddings than \algorithmfont{Fast-OCT-Embed}, in addition to having significantly longer run times. While \algorithmfont{CMR} may be competitive on very sparse graphs (e.g. grid graphs), we found that it was not competitive when the problem graph had a linear density. Fig. \ref{figure:qubits_used_cmr} contains a representative sample using GNP.

\begin{figure}[!h]
\centering
\includegraphics[width=\textwidth]{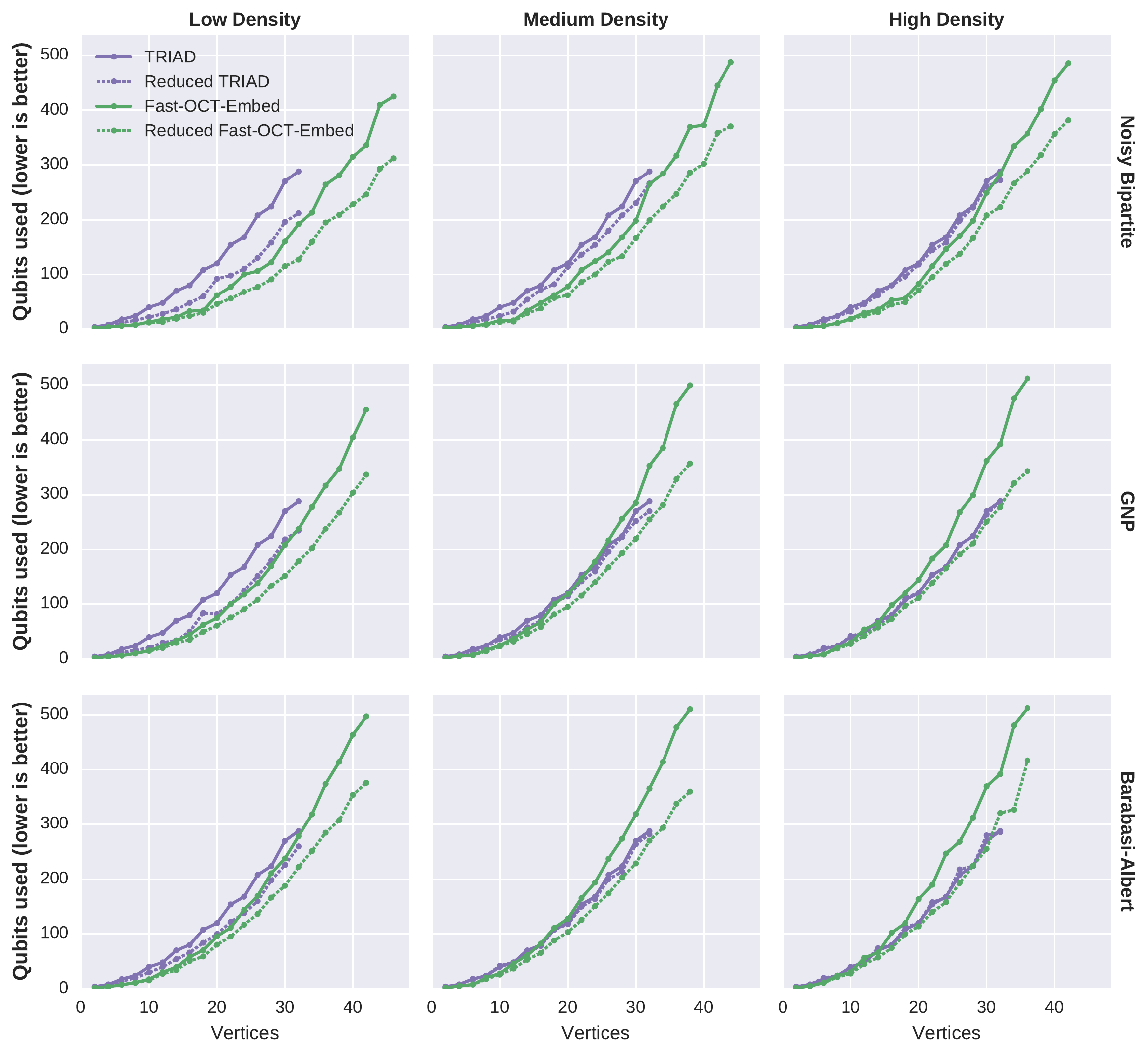}
\caption{Qubits used when embedding into Chimera(4, 8, 8); data points are the median over 25 random graphs and 10 random algorithm seeds. OCT-based algorithms consistently embed larger problem than possible with \algorithmfont{TRIAD}.}
\label{figure:qubits_used_triad}
\end{figure}

Our final comparison is against Choi's \algorithmfont{TRIAD} embedding
algorithm, the state-of-the-art for embedding highly dense problem graphs in
hardware without hard faults. We do not report run times, given that Choi's
algorithm is a
deterministic assignment and the OCT-based algorithm's run times are reported
in previous plots. For this experiment we again used 25 problem seeds and 10
algorithm seeds. Fig. \ref{figure:qubits_used_triad} contains a representative
sample. Again we find that \algorithmfont{Reduced Fast-OCT-Embed} embeds larger
graphs while using less qubits. We also note that \algorithmfont{Reduced TRIAD}
was effective compared to stock \algorithmfont{TRIAD} for all low density
graphs and some medium density graphs, while only adding less than a second to
the run time. Moreover, in several scenarios \algorithmfont{Reduced TRIAD}
performed better than vanilla \algorithmfont{OCT-Embed}, given the ``L''- vs.
``+''-shaped embeddings. However, the flexibility provided with ``+''-shaped
embeddings made the reduction subroutines much more effective, ultimately
producing a better full algorithm. As a best practice, then, we recommend that
embedding algorithm designers apply these standard reduction subroutines before
evaluating an embedding algorithm's effectiveness.

\section{Conclusion}
We have developed a virtual-hardware--based framework for constructing and
deploying optimized techniques for distinct parts of the minor embedding
process. By introducing a \emph{biclique virtual hardware}, we provide a
cleaner interface for embedding into the Chimera hardware layout and enable
modular subroutines for qubit reduction. Exploiting the bipartite structure in
problem graphs with odd cycle transversals, we are able to embed problems from
from a diverse set of generators and densities. Combining these two methods
leads to an embedding algorithm \algorithmfont{Reduced Fast-OCT-Embed} that
embeds larger problems, while using less qubits, for reasonably dense problem
graphs. Moreover, without any parallelization or system-specific tuning,
\algorithmfont{Reduced Fast-OCT-Embed} terminates in the order of seconds. This
algorithm sets a baseline for embedding dense problem graphs that should be
extended and tuned for the user's application.

Future extensions of this work could include tuned implementation of the
reduction methods, which are particularly promising for GPU parallelization.
Additionally, as the problem graph becomes highly dense,
we see that \algorithmfont{OCT-Embed} (by definition) converges to
\algorithmfont{TRIAD}. A more intricate embedding algorithm might not assume
the OCT vertices were a clique, allowing even more flexible embeddings.
Finally, adapting more intricate embedding algorithms (such as
\algorithmfont{CMR}) could provide even better improvements, but would require
significant development in the choice of relevant virtual hardware(s).

\section{Acknowledgements}
The authors would like to thank Steve Reinhardt and the anonymous two reviewers for feedback. This work is supported in part by the Gordon \& Betty Moore Foundation's Data-Driven Discovery Initiative through
Grant GBMF4560 to Blair D.~Sullivan, a National Defense Science \& Engineering Graduate Fellowship and a fellowship by the National Space Grant College and Fellowship Program and the NC Space Grant Consortium to Timothy D.~Goodrich.

\newpage
\bibliographystyle{spmpsci}
\bibliography{bib}

\newpage
\appendix
\section{Computing OCT in Series-Parallel Graphs}
\label{appendix:eppstein}

In this appendix we prove the following result:

\begin{proposition}
$OCT(G)$ can be computed in linear time for a series-parallel graph $G$.
\end{proposition}

The proof is based on the equivalence between series-parallel graphs and graphs
with nested ear decompositions. Using this decomposition, we show that a greedy
algorithm constructs a minimum OCT set. We start by defining series-parallel
graphs and nested ear decompositions:

\begin{definition}[Eppstein \cite{eppstein1992parallel}]
A graph $G$ is two-terminal series-parallel with terminals $s$ and $t$ if it
can be produced by a sequence of the following operations:
\begin{enumerate}
\item Base case: Create new graph, consisting of a single edge directed from
$s$ to $t$.
\item Parallel composition: Given two-terminal series-parallel graphs $X$ and
$Y$ with terminals $s_X$, $t_X$, $s_Y$, and $t_Y$, form a new graph
$G = P(X, Y)$ by identifying $s = s_X = s_Y$ and $t = t_X = t_Y$.
\item Series composition: Given two-terminal series-parallel graphs $X$ and
$Y$, with terminals $s_X$, $t_X$, $s_Y$, form a new graph $G = S(X, Y)$ by
identifying $s = s_X$, $t_X = s_Y$, and $t = t_Y$.
\end{enumerate}
\end{definition}

\begin{definition}[Ear Decomposition (Eppstein \cite{eppstein1992parallel})]
An \emph{ear decomposition} of an undirected graph $G$ is defined to be a
partition of the edges of $G$ into a sequence of \emph{ears}
$(E_1, E_2, \dots, E_k)$. Each ear is a path in the graph with the following
properties:
\begin{enumerate}
\item If two vertices in the path are the same, they must be two endpoints of
the path.
\item The two endpoints of each ear $E_i$, $i > 1$, appear in previous ears
$E_j$ and $E_j'$, with $j < i$ and $j' < i$.
\item No interior point of $E_i$ is in $E_j$ for any $j < i$.
\end{enumerate}
\end{definition}

\begin{definition}[Nest Intervals (Eppstein \cite{eppstein1992parallel})]
Given an ear decomposition $(E_1, E_2, \dots, E_k)$, we say that $E_i$ is
\emph{nested} in $E_j$ if both endpoints of $E_i$ are contained in $E_j$. The
\emph{nest interval} of $E_i$ in $E_j$ is the path in $E_j$ between the two
endpoints of $E_i$.
\end{definition}

\begin{definition}[Nested Ear Decomposition (Eppstein
  \cite{eppstein1992parallel})]
An ear decomposition is \emph{nested} if the following hold:
\begin{enumerate}
\item For each $i > 1$ there is some $j < i$ such that $E_i$ is nested in $E_j$.
\item If two ears $E_i$ and $E_{i'}$ are both nested in the same ear $E_j$,
then either the nest interval of $E_i$ contains that of $E_{i'}$ or vice versa.
\end{enumerate}
\end{definition}

\noindent Eppstein's result shows that these two graph classes are equivalent:

\begin{theorem}[Eppstein \cite{eppstein1992parallel}]
Any undirected two-terminal series-parallel graph has a nested ear
decomposition starting with a path between the terminals, and any undirected
graph with a nested ear decomposition is two-terminal series-parallel with its
terminals being the endpoints of the first ear.
\end{theorem}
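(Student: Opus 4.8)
The plan is to prove the two directions of the equivalence separately, each by an induction that mirrors the recursive structure on its side. For the forward direction --- that every two-terminal series-parallel graph $G$ with terminals $s,t$ admits a nested ear decomposition whose first ear is an $s$--$t$ path --- I would induct on the number of composition operations used to build $G$. The base case is a single edge $(s,t)$, whose one-ear decomposition $E_1=(s,t)$ trivially satisfies the nesting conditions. For a parallel composition $G=P(X,Y)$, the inductive hypothesis gives nested ear decompositions of $X$ and $Y$ whose first ears $P_X,P_Y$ are $s$--$t$ paths after the identification $s=s_X=s_Y$, $t=t_X=t_Y$; I would keep $P_X$ as the first ear of $G$, list the remaining ears of $X$, then append $P_Y$ followed by the remaining ears of $Y$. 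Since both endpoints of $P_Y$ lie on $P_X$, the ear $P_Y$ becomes nested in $E_1=P_X$ with nest interval all of $P_X$, while the other ears of $Y$ nest exactly as before within the $Y$-portion. For a series composition $G=S(X,Y)$ with $t_X=s_Y$, the first ears $P_X$ and $P_Y$ meet only at the cut vertex $t_X$, so their concatenation is an $s_X$--$t_Y$ path that I would take as the new first ear $E_1$; every other ear keeps its endpoints inside the $X$-part or the $Y$-part of $E_1$, so the nesting relation and the interval-containment condition are inherited unchanged. In each case verifying conditions (1)--(2) reduces to checking that the nest intervals of ears sharing a parent remain comparable.

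For the backward direction --- that a graph $G$ with a nested ear decomposition $(E_1,\dots,E_k)$ is two-terminal series-parallel with terminals the endpoints of $E_1$ --- I would induct on $k$. When $k=1$, $G$ is a single path, which is series-parallel as an iterated series composition of its edges. For $k>1$ I would peel off the last ear $E_k$: since no ear is nested in $E_k$ and no later ear traverses its interior, deleting the interior vertices and edges of $E_k$ leaves a graph $G'$ whose remaining ears $(E_1,\dots,E_{k-1})$ still form a nested ear decomposition with the same first ear, so by induction $G'$ is series-parallel with terminals the endpoints of $E_1$. To reinsert $E_k$, let $E_j$ with $j<k$ be the ear it is nested in, and let $a,b$ be its endpoints, both lying on $E_j$. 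I would argue that the nest interval of $E_k$ in $E_j$, together with all ears nested transitively inside that interval, forms a two-terminal series-parallel subgraph $H$ of $G'$ with terminals $a,b$; then $G$ is recovered from $G'$ by replacing $H$ with the parallel composition of $H$ and the path $E_k$, which keeps the graph series-parallel and leaves the global terminals undisturbed.

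The main obstacle I expect is the backward direction, specifically justifying that the nest interval of $E_k$ together with its transitively nested ears really is a separable two-terminal series-parallel subgraph $H$ with terminals exactly $a,b$. Establishing this requires condition (2): because the nest intervals sharing a common parent are linearly ordered by containment, the family of all nest intervals is laminar, so intervals never cross and the nested sub-structures can be assembled by nested parallel compositions without interference. The delicate bookkeeping is to track, as ears are reinserted, which pairs of vertices serve as the terminals of the parallel compositions, and to confirm that attaching $H$ back along $E_j$ uses only series compositions at $a$ and $b$ --- so that the induction preserves both series-parallelness and the identity of the outermost terminals as the endpoints of $E_1$.
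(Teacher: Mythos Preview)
The paper does not actually prove this theorem: it is quoted verbatim from Eppstein~\cite{eppstein1992parallel} and used as a black box to justify that series-parallel graphs and graphs with nested ear decompositions coincide. So there is no ``paper's own proof'' to compare against; your proposal stands on its own.

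That said, your two-sided induction is precisely the shape of Eppstein's original argument, and the forward direction for the base case and parallel composition is fine. There is, however, a real snag in your series-composition step when measured against the definition \emph{as the paper states it}. Condition~(2) here requires that any two ears nested in the same parent have nest intervals related by containment. After you concatenate $P_X$ and $P_Y$ into the new first ear $E_1$, an ear formerly nested in $P_X$ and an ear formerly nested in $P_Y$ are now both nested in $E_1$, but their nest intervals lie on opposite sides of the cut vertex $t_X=s_Y$ and are therefore disjoint, not comparable by inclusion. Your sentence ``the nesting relation and the interval-containment condition are inherited unchanged'' glosses over exactly this. Eppstein's own definition in~\cite{eppstein1992parallel} includes the disjoint case (``either the nest intervals are disjoint, or one contains the other''), which the paper has silently dropped; with the full definition your argument goes through, but with the paper's truncated version the forward direction as you have written it fails, and indeed the theorem itself becomes false (take a path $s\,a\,b\,t$ and hang one extra ear on $[s,a]$ and another on $[b,t]$).

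For the backward direction, your instinct that condition~(2) forces a laminar family of nest intervals is the right one and is what makes the reconstruction work. The step you flag as delicate --- showing that the nest interval of $E_k$ together with everything transitively nested inside it is itself two-terminal series-parallel with terminals $a,b$ --- is exactly where the work lies, and you would need a secondary induction (or an explicit recursive construction following the laminar tree of intervals) rather than a single appeal to the outer inductive hypothesis on $G'$. As written, the sentence ``I would argue that \dots\ forms a two-terminal series-parallel subgraph $H$'' is a statement of what needs to be shown, not an argument for it.
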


Furthermore, Eppstein shows that these decompositions can be computed in
$O(\log^2(n))$ time on a parallel computer, therefore computing the
decomposition itself will not bottleneck an OCT-computing algorithm. We now
show that given a nested ear decomposition, we can greedily compute a minimum
OCT set. First, we define the parity of ears.

\begin{definition}[Ear Parity]
We say that an ear $E_i$ is \emph{odd} if the number of vertices in $E_i$ and
its nest interval sum to an odd number. We define an \emph{even ear}
analogously.
\end{definition}

Next, we want to show that we can compute the minimum OCT set on a single nest
interval correctly.

\begin{lemma}
Given an ear decomposition $(E_1, \dots, E_k)$, let $E = (E_i, \dots, E_j)$ be
an ordered, maximal list of ears contained in a single nest interval. Then the
minimum number of OCT vertices contained in these ears is number of maximal
in-order sublists of $E$ composed only of odd ears.
\label{lemma:maximal_nest}
\end{lemma}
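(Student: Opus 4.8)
The plan is to recast the statement as a minimum hitting-set problem for odd cycles and to resolve it greedily along the list $E=(E_i,\dots,E_j)$. The starting point is the elementary fact behind \emph{ear parity}: for each $E_\ell$ in the list, the union of $E_\ell$ with its nest interval is a cycle $C_\ell$, and these two paths meet only in their two common endpoints, so $|V(C_\ell)|$ has the parity of $|V(E_\ell)| + |V(\text{nest interval of }E_\ell)|$; hence $C_\ell$ is an odd cycle precisely when $E_\ell$ is an odd ear. Since all of $E_i,\dots,E_j$ lie in a single nest interval, their nest intervals form a laminar family of subpaths, and I would use this to pin down when two of these cycles meet — $C_\ell$ and $C_{\ell'}$ share a vertex exactly when their nest intervals intersect. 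An OCT set must hit every odd cycle, in particular every odd $C_\ell$, so the lower bound reduces to counting how many vertices this forces; for the upper bound I must additionally verify that a set hitting all the odd $C_\ell$ already leaves the relevant subgraph bipartite, which I would do by producing an explicit $2$-coloring.

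For the lower bound I would exhibit one odd cycle from each maximal in-order run of odd ears and argue these $r$ witnesses are pairwise vertex-disjoint: distinct runs are separated by an even ear, and the nesting axioms together with the list order push the runs into disjoint parts of the enclosing nest interval; hence at least $r$ OCT vertices are needed, where $r$ is the number of such runs. For the upper bound I would first establish the structural heart of the lemma: a maximal in-order run of odd ears is a \emph{nested chain}, its nest intervals totally ordered by containment, so that any vertex of the innermost (equivalently, last-added) nest interval of that run lies on every $C_\ell$ of the run. Deleting one such pivot vertex per run yields a candidate OCT set of size $r$; to finish I would $2$-color the remainder, extending a coloring outward along the enclosing nest interval and across each surviving ear, which stays consistent precisely because every surviving $C_\ell$ is now even or has been broken. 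Combining the two bounds gives the claimed equality in Lemma \ref{lemma:maximal_nest}.

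The main obstacle, I expect, is the structural claim just highlighted: making precise how the \emph{order} of the list $E$ mirrors the geometry of the nested ear decomposition, so that a maximal block of consecutive odd ears is forced to be a nested chain (hence destroyable by a single vertex) while the block boundaries are exactly the places where the odd cycles genuinely separate. This is what makes ``number of maximal odd runs'' the correct invariant rather than, say, ``number of odd ears'' or ``number of leaves of the laminar forest of odd nest intervals,'' and it is where the nesting axioms — each ear nested in an earlier one, and the containment relation among ears sharing a parent — must be used carefully, with separate attention paid to degenerate ears (a closed ear, or one whose nest interval is a single vertex) for which either $C_\ell$ or the choice of pivot vertex needs its own small argument.
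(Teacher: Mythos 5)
The central gap is in your lower bound. You claim the $r$ witness cycles, one per maximal odd run, can be chosen pairwise vertex-disjoint because the nesting axioms ``push the runs into disjoint parts of the enclosing nest interval.'' The second nesting axiom says the opposite: ears nested in a common ear have nest intervals totally ordered by \emph{containment}, so the ears of the list stack concentrically over the same stretch of the enclosing ear, and the cycles $C_\ell$ coming from different runs overlap along the shared inner intervals rather than living side by side. Concretely, take a base ear with vertex sequence $(s,a,b,c,d,t)$ and nest in it the three ears $(s,x,t)$ (odd), $(a,y,c)$ (even), $(a,z,b)$ (odd); this is a valid nested ear decomposition of a series-parallel graph, the ordered list has two maximal odd runs, but the two natural witnesses (the $7$-cycle through $s,a,b,c,d,t,x$ and the triangle $a,z,b$) share $a$ and $b$, and in fact deleting the single vertex $a$ already makes the whole subgraph bipartite, so \emph{no} two vertex-disjoint odd cycles exist at all. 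Hence the disjoint-witness mechanism cannot establish ``minimum OCT $\geq$ number of runs''; configurations in which nest intervals share an endpoint are exactly where that inequality is delicate, and your plan offers no way around them (note also that the paper's own proof only argues sufficiency of the run count, by a different route).

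A second, related problem is the structural claim driving your upper bound: a maximal run of consecutive odd ears is a nested chain only when all its ears are nested in the same earlier ear; in general consecutive ears of the list may be nested in different parent ears, so their nest intervals need not be comparable and a single pivot vertex in the ``innermost'' interval of the run need not lie on every $C_\ell$ of the run. For comparison, the paper proceeds by induction on the number of runs: it discards outer even ears (the all-even case is $2$-colorable), and for a block of odd ears it deletes an endpoint of the innermost nest interval, which hits every cycle of the concentric chain --- precisely the containment structure your disjointness claim contradicts. To salvage your approach you would need both a precise statement of how the list order reflects this containment chain and a genuinely different lower-bound argument, since the pairwise-disjoint witnesses you rely on simply do not exist in general.
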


\begin{proof}
We proceed by induction on the number of sublists. Suppose that there are zero
maximal sublists of odd ears, therefore every ear is even. Then every path from
the left-most vertex on the nest interval to the right-most vertex on the nest
interval will have the same parity, and we are able to two-color these cycles
and the minimum OCT number is zero. Suppose instead that there is one maximal
sublist of odd ears, therefore all ears are odd. Removing an endpoint from the
inner-most odd cycle renders the remaining edges of this smallest nest interval
into bridges that cannot be part of a cycle. This removal also breaks all odd
cycles in the maximal nest interval, because any cycle on the remaining ears
must use the vertices from two odd cycles of length $2x+1$ and $2y+1$, minus
the length of the smallest nest interval twice, leaving an odd number of
vertices in the cycle. Again we can two-color these and we are done.

Suppose we have an interval with $k$ maximal sublists. If there are any even
ears on the outside then we can remove them using the first base case. We now
find the inner-most odd ear that is outside of every remaining even ear.
Removing an endpoint from this ear renders the outer odd ears bipartite by the
second base case. Applying the inductive hypothesis to the earlier ears finds
$k-1$ OCT vertices, therefore we have found a total of $k$ OCT vertices.
\smartqed \qed
\end{proof}

\begin{corollary}
We can compute the minimum OCT number of the ears contained in a single maximal
nest interval in linear time.
\end{corollary}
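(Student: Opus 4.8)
The plan is to reduce directly to Lemma~\ref{lemma:maximal_nest}: the number we must output is exactly the count of maximal in-order sublists of odd ears in the list $E = (E_i, \dots, E_j)$. Hence it suffices to (i) compute the parity (odd/even) of every ear in $E$, and (ii) make one linear scan of the resulting parity sequence, incrementing a counter whenever an odd ear begins a new maximal run (i.e., whenever it is the first ear of the list or is preceded by an even ear). Step (ii) is plainly linear, so the only real work is step (i).

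For step (i), recall that $E_\ell$ is odd precisely when $|V(E_\ell)|$ plus the number of vertices on the nest interval of $E_\ell$ in its parent ear is odd, where a parent $E_p$ (with $p < \ell$ and $E_\ell$ nested in $E_p$) is guaranteed by property~1 of a nested ear decomposition. The term $|V(E_\ell)|$ is read off during a single pass over the decomposition: since the ears partition $E(G)$ and interior vertices of an ear occur in no earlier ear, each edge and each interior vertex is charged to a unique ear, so $\sum_\ell |V(E_\ell)| = O(n+m)$ and $|V(E_\ell)|$ is available in time proportional to the ear's length. During the same pass I would, for each ear, index its vertices by position along the path; the nest-interval length is then the absolute difference of the positions of $E_\ell$'s two endpoints within $E_p$, an $O(1)$ arithmetic step once those positions are looked up. Summing per-ear work gives $O(n+m)$, which is $O(n)$ for series-parallel graphs since $m = O(n)$.

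The main obstacle is bookkeeping: a naive computation of nest-interval lengths could re-walk parent ears and degrade the bound to quadratic. This is avoided by the up-front indexing just described --- storing, for each (vertex, ear) pair with the vertex on that ear's path, the vertex's position within that ear (total size $O(n+m)$, matching the sum of ear lengths) --- so that each endpoint position a child ear needs from its parent is an $O(1)$ lookup. Constructing the nested ear decomposition itself is not a bottleneck by Eppstein's result. We remark that this corollary is precisely the ingredient that upgrades the single-nest-interval argument of Lemma~\ref{lemma:maximal_nest} into a global linear-time procedure for $OCT(G)$, which is the route by which Proposition~\ref{proposition:seriesparallel} is established.
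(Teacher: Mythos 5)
Your proposal is correct and follows essentially the same route as the paper: the paper's proof of this corollary is the single observation that the argument of Lemma~\ref{lemma:maximal_nest} visits each ear once, which is exactly your reduction to a one-pass count of maximal runs of odd ears. Your additional bookkeeping (per-ear parity via stored endpoint positions in the parent ear) just makes explicit the constant-work-per-ear claim that the paper leaves implicit.
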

\begin{proof}
In the above proof we visited each ear once.
\smartqed \qed
\end{proof}

\begin{proposition}
$OCT(G)$ can be computed in linear time for a series-parallel graph $G$.
\end{proposition}

\begin{proof}
We proceed by induction on the number of maximal nest intervals. If there are
no nest intervals then we have a single ear and are done, the graph is (by
definition) bipartite. Otherwise there is some nest interval. Applying Lemma
\ref{lemma:maximal_nest}, we can compute a minimum OCT set. Since every nest
interval is disjoint, by definition, we can apply the inductive hypothesis to
compute the minimum OCT set of the other intervals, visiting each interval
exactly once. The number of intervals is bounded by the number of vertices,
therefore we compute a minimum OCT set in linear time.
\smartqed \qed
\end{proof}

\end{document}